\newtheorem{theorem}{Theorem}
\newtheorem{lemma}{Lemma}
\newtheorem{proposition}{Proposition}
\newcommand{\norm}[1]{\left\Vert#1\right\Vert}
\newcommand{\set}[1]{\left\{#1\right\}}
\newcommand{\parr}[1]{\left (#1\right )}
\newcommand{\brac}[1]{\left [#1\right ]}
\newcommand{\Real}{\mathbb R}
\newcommand{\Integer}{\mathbb Z}
\newcommand{\too}{\rightarrow}
\newcommand{\A}{\mathcal{A}}
\newcommand{\area}[1]{\mathrm{Area}(#1)}
\newcommand{\aff}{\mathrm{aff}}
\newcommand{\sign}{\mathrm{sign}}
\newcommand{\tr}{\mathrm{tr}}
\def \M{\mathrm{\textbf{M}}} 
\def \K{\mathrm{\textbf{K}}} 
\def \C{\mathcal{C}}   
\def \OOmega{\mbox{\boldmath$\Omega$}}
\def \sphere{\mathbb{S}} 
\def \vecx{\mathrm{\textbf{x}}} 
\def \v{\mbox{\boldmath$v$}} 
\def \u{\mbox{\boldmath$u$}} 
\newcommand {\closure}[1]{\textrm{Closure}(#1)} 
\newcommand {\interior}[1]{\textrm{Interior}(#1)} 
\begin{document}

\title[]{Bijective Mappings Of Meshes With Boundary \\ And The Degree In Mesh Processing }
\author{Y. Lipman}
\address{Weizmann Institute Of Science}%
\email{Yaron.Lipman@weizmann.ac.il}%

\date{\today}
\maketitle

\begin{abstract}
This paper introduces three sets of sufficient conditions, for generating bijective simplicial mappings of manifold meshes.

A necessary condition for a simplicial mapping of a mesh to be injective is that it either maintains the orientation of all elements or flips all the elements. However, these conditions are known to be insufficient for injectivity of a simplicial map. In this paper we provide additional simple conditions that, together with the above mentioned necessary conditions guarantee injectivity of the simplicial map.

The first set of conditions generalizes classical global inversion theorems to the mesh (piecewise-linear) case. That is, proves that in case the boundary simplicial map is bijective and the necessary condition holds then the map is injective and onto the target domain.

The second set of conditions is concerned with mapping of a mesh to a polytope and replaces the (often hard) requirement of a bijective boundary map with a collection of linear constraints and guarantees that the resulting map is injective over the interior of the mesh and onto. These linear conditions provide a practical tool for optimizing a map of the mesh onto a given polytope while allowing the boundary map to adjust freely and keeping the injectivity property in the interior of the mesh. Allowing more freedom in the boundary conditions is useful for two reasons: a) it circumvents the hard task of providing a bijective boundary map, and b) it allows optimizing the boundary map \emph{together} with the simplicial map to achieve lower energy levels.

The third set of conditions adds to the second set the requirement that the boundary maps are orientation preserving as-well (with a proper definition of boundary map orientation). This set of conditions guarantees that the map is injective on the boundary of the mesh as-well as its interior. Several experiments using the sufficient conditions are shown for mapping triangular meshes.

A secondary goal of this paper is to advocate and develop the tool of degree in the context of mesh processing.

\end{abstract}


\section{Introduction}
Triangular and tetrahedral meshes are prominent in representing surfaces and volumes in various fields such as computer graphics and vision, medical imaging, and engineering. Many of the algorithms and applications that use mappings of meshes require injectivity of the map to operate correctly. Nevertheless, injectivity in general is a hard constraint and poses a real challenge to guarantee.

The main goal of this paper is to provide practical sufficient conditions that assures that a simplicial mapping $\Phi:\M\too\Real^d$ is an injection taking a $d$-dimensional compact mesh $\M$ onto a prescribed polytope $\Omega\subset \Real^d$.
%
A secondary goal is to advocate and develop the tool of degree in the context of mesh processing.

In classical analysis and elasticity theory global inversion theorems provide sufficient conditions for injectivity, started with the work of Hadamard, Darboux and Stoilow and followed with the work of many others, see for example \cite{MeistersOlech63,mcauley66,Ball81,Ciarlet87,Massey92,Giuseppe94,Cristea00}.
 Nevertheless, most of the previous results deal with smooth mappings and/or require local injectivity of the map, both unnatural for the mesh case (note that even if every element of a mesh is not inverted it is still not necessarily locally injective). More importantly, the sufficient conditions offered in previous works mostly assume that the mapping under consideration is bijective when restricted to the boundary of the domain. In practice, this condition is often too restrictive for two main (practical) reasons: first, constructing a bijective boundary mapping can be as challenging as the original problem, and second, we are often required in applications to find a mapping of a mesh that is minimizing a certain cost (or energy) and we do not know a-priori what would be the optimal boundary map.

A necessary condition for a simplicial map $\Phi$ to be injective is that the map either maintains the orientations of all elements or inverts the orientation of all elements. Nevertheless, as is known (and shown later in this paper) these conditions are not sufficient for injectivity. In this work we will prove three sets of conditions that in addition to the necessary conditions form sufficient conditions for injectivity.

The first set of conditions generalizes the global inversion theorems directly to meshes (albeit without the usual smoothness or local injectivity requirements). It shows that a sufficient condition for injectivity of a simplicial map $\Phi$ is that the boundary map $\Phi\vert_{\partial\M}:\partial\M\too\partial\OOmega$ is a bijection, together with the necessary conditions of orientation consistency.

The second set of conditions is aimed at the problem of mapping the mesh $\M$ onto a polytope $\OOmega$ and allows weaker boundary conditions where it is not required to provide a particular boundary mapping.
The benefit in this second set is that it only adds a set of \emph{linear} constraints (in addition to the necessary conditions) and therefore allows building algorithms that optimize over a \emph{collection} of boundary maps while guaranteeing that the map $\Phi$ is injective over the interior of the mesh $\M$ and that it covers the target domain $\Phi(\M)=\OOmega$. Note that it does not in general guarantee the injectivity of the boundary mapping as detailed later.

The third set of conditions is also aimed at mapping $M$ to a polytope $\OOmega$ and adds to the second set of conditions the requirement that the boundary maps are also orientation preserving (to be defined precisely soon) and by that guarantees the injectivity of the map over the entire $\M$ (including the boudnary of $\M$).

A related work to ours is the work of Floater \cite{Floater03one-to-onepiecewise} generalizing Tutte's drawing a planar graph paper \cite{Tutte:1963} that also provides a set of sufficient conditions for generating injective mappings of disk-type triangular meshes ($d=2$ meshes) mapped onto a convex polygonal domain. More general sufficient conditions were studied in \cite{Gortler06discreteone-forms}. Unfortunately, these constructions do not generalize to higher dimensions and/or non-convex target domains. Other related works that studied local and global injective mappings include \cite{Aronov:1993,Xu:2011:ETG:2286660.2286720,Schueller:LIM:2013,Schneider:2013:BCM} however these works do not seem to overlap strongly with our goals. The current paper also provides full background and mathematical underpinning to the injectivity arguments from our previous papers \cite{Lipman:2012:BDM,Aigerman:bd3d:2013}.

\section{Preliminaries and Main Results}
\label{s:preliminaries}

Our object of interest is a $d$-dimensional compact manifold mesh $\M=(\K)$, where $\K=\set{\sigma}$ is a $d$-dimensional finite simplicial complex, and $\sigma$ denotes a simplex (we will also use other greek letter such as $\tau,\kappa,\alpha$ to denote simplices of $\M$).
A \emph{face} $\sigma\in\K$ will be our generic name to any simplex in $\M$.
We will use the term $\ell$-face to denote a simplex of dimension exactly $\ell$. For example, $2$-face is a triangle, $3$-face is a tetrahedron (tet).  Each $\sigma\in\K$ has a fixed orientation:  all $\sigma\in\K_d$ are assigned with consistent orientation, $\sigma\in\partial\K_{d-1}$ are assigned with the induced orientation, and all other faces are given some arbitrary but fixed orientation. We will use the notation $\K_\ell$ to denote the subset of $\ell$-faces in $\K$, and so $\K=\K_0\cup\K_1\cup...\cup\K_d$. %
For $d=2$ (i.e., triangular mesh) we have  $\K=\K_0\cup\K_1\cup\K_2$, where $\K_0\subset \Real^n$, $n\geq d$, is a collection of points, $\K_1$ the set of edges, and $\K_2$ the collection of triangles. For $d=3$ (i.e., tetrahedral mesh), we have $\K=\K_0\cup\K_1\cup\K_2\cup\K_3$, where $\K_3$ is the set of tetrahedra.
Our simplices are considered by default closed sets (e.g., a point, a closed line segment, a triangle with its boundary). In this context the mesh $\M$ can be seen as the closed set of points in $\Real^n$ which is constructed as the union of all the simplices in $\K$. We will restrict our attention to compact, orientable, connected meshes $\M$ with boundary $\partial\M$.
We will mark by $\partial\M=(\partial \K)$ the boundary mesh of $\M$ (e.g., a polygon for $d=2$, and triangular mesh for $d=3$), and by $\partial\K$ the set of all faces that are contained in $\partial\M$. $\partial\K_\ell$ will denote the subset of boundary $\ell$-faces of $\M$, that is all $\ell$-faces $\sigma\in\K_\ell$ such that $\sigma\subset\partial\M$.


A simplicial map $\Phi:\M\too \Real^d$ is a  continuous map that is an affine map when restricted to each face $\sigma$ of $\M$, we denote this affine map as $\Phi\vert_{\sigma}$. A simplicial map is uniquely determined by setting the image position of each point ($0$-face), that is $u=\Phi(v)\in \Real^d$, for all $v\in\K_0$, and extending linearly over all faces.

Our target domain $\OOmega\subset\Real^d$ will be a (closed) polytope. Its boundary is denoted $\partial\OOmega$, and similarly to meshes, $\partial\OOmega_\ell$ will denote the collection of all boundary $\ell$-faces (i.e., polytope's faces of dimension $\ell$ contained in the boundary $\partial\OOmega$). For example, for $d=2$, $\partial\OOmega$ is a polygonal line, while for $d=3$, $\partial\OOmega$ is a polyhedral surface.

We are seeking conditions that guarantee that a simplicial mapping $\Phi:\M\too\OOmega$ is injective and onto. Let us start with a simple \emph{necessary} condition: if $\Phi$ is injective then it does not degenerate any $d$-face and either maintains the orientation of all $d$-faces or flips all orientations of  $d$-faces:
\begin{proposition}\label{prop:necessary}(Necessary condition for injectivity of a simplicial map)\\
An injective simplicial map $\Phi:\M\too\Real^d$ of a  $d$-dimensional compact mesh does not degenerate any $d$-face and satisfies exactly one of the following: 1) maintains the orientation of all $d$-faces, or 2) invert the orientation of all $d$-faces.
\end{proposition}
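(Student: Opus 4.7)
The plan is to split the proposition into two parts---(i) non-degeneracy of every $d$-face and (ii) the all-or-nothing orientation dichotomy---and to handle (ii) via a local argument across shared $(d-1)$-faces, then propagate using connectedness of $\M$.

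Part (i) is quick: if $\Phi\vert_\sigma$ degenerates a $d$-face $\sigma$, then the affine map $\Phi\vert_\sigma$ has linear part of rank strictly less than $d$, so it admits a nontrivial kernel. Because $\sigma$ has nonempty $d$-dimensional interior, a kernel direction produces two distinct points of $\sigma$ with the same image, contradicting injectivity of $\Phi$.

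For part (ii), the key step is local. Let $\sigma_1,\sigma_2$ be two $d$-faces sharing an interior $(d-1)$-face $\tau$, and write $\sigma_i = \tau \cup \set{v_i}$. I would express ``$\Phi\vert_{\sigma_i}$ preserves orientation'' as the sign of a determinant built from $\Phi(\tau)$ and $\Phi(v_i)$, using the fact that the consistent orientations of $\sigma_1,\sigma_2$ induce \emph{opposite} orientations on $\tau$. A routine sign-of-determinant check then gives the following dichotomy: $\Phi\vert_{\sigma_1}$ and $\Phi\vert_{\sigma_2}$ have matching orientation behavior (both preserving or both reversing) if and only if $\Phi(v_1)$ and $\Phi(v_2)$ lie on opposite sides of the hyperplane $\aff(\Phi(\tau)) \subset \Real^d$. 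If they disagreed, then $\Phi(\sigma_1)$ and $\Phi(\sigma_2)$ would both lie on the same side of $\aff(\Phi(\tau))$ near $\tau$, and a short geometric argument near an interior point of $\tau$ produces two distinct points of $\M$ with identical images, again contradicting injectivity. Globalization is then just connectedness: the dual graph of $\M$ (vertices the $d$-faces, edges the shared interior $(d-1)$-faces) is connected for a connected manifold mesh, so local consistency propagates to all $d$-faces.

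The main obstacle is the orientation bookkeeping: since $\M$ may be embedded in $\Real^n$ with $n > d$, ``$\Phi\vert_\sigma$ preserves orientation'' must be defined intrinsically with respect to the orientation of $\M$. I would handle this by parameterizing each $d$-face $\sigma$ by a fixed oriented reference $d$-simplex in $\Real^d$ and working with the determinants of the composed affine maps, then carefully matching the sign conventions on the shared face $\tau$ with the definition of consistent orientation across $\sigma_1$ and $\sigma_2$.
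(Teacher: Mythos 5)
Your proposal is correct and follows essentially the same route as the paper: rule out degenerate $d$-faces directly, then use connectedness to reduce the orientation dichotomy to a pair of $d$-faces sharing a $(d-1)$-face and show that mismatched orientations force their images to overlap near the shared face, contradicting injectivity. The paper states this overlap claim tersely ("the intersection of their interiors is not empty"), whereas you justify it via the same-side-of-$\aff(\Phi(\tau))$ determinant argument, which is the right way to make that step rigorous.
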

Indeed, if the affine map $\Phi\vert_\sigma$ is degenerate for some $\sigma\in\K_d$ then it is clearly not injective. So all $d$-faces are necessarily mapped to $d$-faces.  Next, assume (in negation) that some of the $d$-faces' orientation is preserved and some are inverted, then by connectivity there has to be two adjacent (i.e., sharing a $d-1$ face) $d$-faces $\sigma_1,\sigma_2\in\K_d$ such that $\Phi(\sigma_1)$ and $\Phi(\sigma_2)$ have opposite orientations. Since the two simplices $\Phi(\sigma_1),\Phi(\sigma_2)$ are sharing a $d-1$-face the intersection of their interiors is not empty leading to a contradiction with the assumption that $\Phi$ is injective.

\begin{wrapfigure}[14]{r}{0.35\textwidth}
  \begin{center}\vspace{-0.0cm}\hspace{-0.3cm}
    \includegraphics[width=0.35\textwidth]{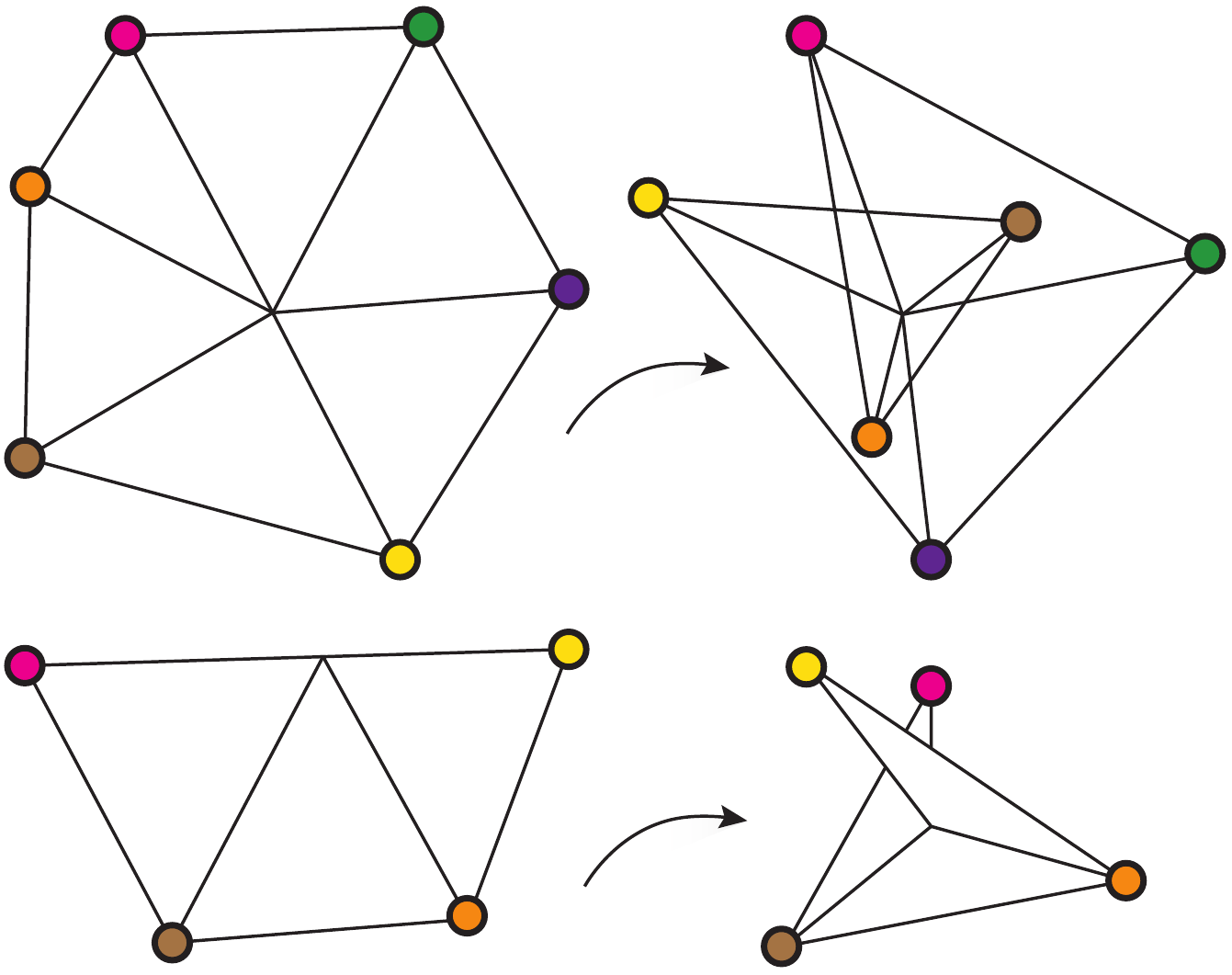}\vspace{-0.4cm}
  \end{center}
  \caption{Necessary condition for injectivity is not sufficient.}\label{fig:counter_example}\vspace{-0.6cm}
\end{wrapfigure}
A simplicial map that satisfies the above necessary condition will be said to have the \emph{consistent orientation} property. In this paper we will restrict our attention to \emph{orientation preserving} simplicial maps. That is, all the affine maps $\Phi\vert_\sigma$, $\sigma\in\K_d$ are orientation preserving. The orientation reversing maps can be treated similarly.

Maintaining the orientation of the $d$-faces is not sufficient to guarantee injectivity; Figure \ref{fig:counter_example} shows two counter-example for $d=2$ (i.e., a triangular mesh). Similar and more elaborate examples can be given for the $d=3$ case (tetrahedral mesh). Our goal is to characterize additional ``simple'' conditions that will assure, along with the consistent orientation condition, injectivity of the simplicial map.

Before getting to the main results we will prove the following theorem, different versions of which appeared in \cite{Lipman:2012:BDM,Aigerman:bd3d:2013}. The current paper provides mathematical underpinning and generalizations to the arguments provided in these earlier works.
\begin{theorem}\label{thm:bijectivity1}
A non-degenerate orientation preserving simplicial map $\Phi:\M\too\Real^d$ of a $d$-dimensional compact mesh with boundary $\M$ is a bijection $\Phi:\M\too\OOmega$ if the boundary map $\Phi\vert_{\partial\M}:\partial\M\too \partial\OOmega$ is bijective.
\end{theorem}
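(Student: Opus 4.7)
The plan is to use a piecewise-linear version of topological degree. For any $p\in\Real^d\setminus\Phi(\partial\M)$, I would set
\[
\mathrm{deg}(\Phi,\M,p)\;=\;\sum_{\substack{\sigma\in\K_d\\ p\in\mathrm{int}\,\Phi(\sigma)}}\sign\parr{\det\Phi\vert_\sigma}.
\]
Under the orientation-preserving hypothesis, every nonzero contribution equals $+1$, so at any such $p$ the degree coincides with the unsigned cardinality of $\Phi^{-1}(p)$. The two facts I would invoke are the standard ones: this degree is locally constant on $\Real^d\setminus\Phi(\partial\M)$, and it depends only on the restriction $\Phi\vert_{\partial\M}$.

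Next I would compute the degree on each connected component of $\Real^d\setminus\Phi(\partial\M)$. Since $\Phi\vert_{\partial\M}$ is a bijection onto $\partial\OOmega$ and $\OOmega$ is a closed polytope, $\Phi(\partial\M)=\partial\OOmega$ separates $\Real^d$ into $\mathrm{int}(\OOmega)$ and an unbounded exterior. The orientation-preserving property of $\Phi$, together with the boundary orientation induced from $\M$ and the corresponding boundary orientation of $\OOmega$, makes the restricted bijection $\Phi\vert_{\partial\M}$ an orientation-preserving PL homeomorphism between two oriented closed $(d-1)$-dimensional PL manifolds. A direct computation (e.g.\ by picking any single probe point inside $\OOmega$ and evaluating the signed simplex count) then gives degree $+1$ on $\mathrm{int}(\OOmega)$ and degree $0$ on the exterior.

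Combining these two ingredients closes the argument. For every $p\in\mathrm{int}(\OOmega)$ that avoids the image of the $(d{-}1)$-skeleton of $\K$, there is exactly one $\sigma\in\K_d$ with $p\in\mathrm{int}\,\Phi(\sigma)$, hence $p$ has a unique preimage. For generic $p$ outside $\OOmega$, no preimage exists. Since $\Phi(\M)$ is compact, hence closed, and contains a dense subset of $\OOmega$ while omitting a dense subset of its complement, we conclude $\Phi(\M)=\OOmega$. For global injectivity: if distinct $x_1\neq x_2$ satisfied $\Phi(x_1)=\Phi(x_2)$, continuity and the orientation-preserving property would force nearby generic points of $\mathrm{int}(\OOmega)$ to have at least two preimages, contradicting degree $+1$; collisions on $\partial\M$ are excluded by the hypothesis, and interior/boundary collisions are excluded because the unique preimage of a generic interior target lies in the interior of some $d$-face.

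The main obstacles I anticipate are: (a) establishing the two standard properties of the degree directly in the PL category, without quoting smooth-category results (this requires a short combinatorial or homological argument using the fact that modifying the interior simplicial map without touching $\partial\M$ changes the signed preimage count by $0$, essentially because each interior $(d{-}1)$-face is shared by two $d$-faces whose contributions cancel); and (b) handling the non-generic target points that lie on $\Phi$ of some lower-dimensional face — these have to be recovered a posteriori from the generic picture together with the closedness of $\Phi(\M)$ and a local PL analysis near each such point.
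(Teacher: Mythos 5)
Your proposal is correct and follows essentially the same route as the paper: a degree-theoretic signed preimage count that is locally constant off $\Phi(\partial\M)$, is determined by the boundary map (interior $(d-1)$-faces cancelling in pairs, which is exactly the paper's additivity $\sum_{\sigma\in\K_d}\deg(\Phi_q,\partial\sigma)=\deg(\Phi_q,\partial\M)$), equals $1$ on $\interior{\OOmega}$ and $0$ outside because $\Phi\vert_{\partial\M}$ is a homeomorphism onto $\partial\OOmega$, followed by an extension from generic points to all points via compactness and the open-map property. The only difference is packaging: the paper realizes the degree as the pullback of a volume form on a sphere centered at the probe point and isolates the counting step as Theorem \ref{thm:main_inequality}, whereas you define it directly as a signed simplex count; these coincide by Proposition \ref{prop:degree_properties}.
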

In practice, using this theorem to build an injective simplicial map $\Phi:\M\too\OOmega$ requires to a-priori come up with a bijective boundary map $\Phi\vert_{\partial \M}$, and only later search for a simplicial map satisfying this boundary condition. In many common scenarios it is better to search for a map without restricting the boundary map to a fixed boundary map. For example, if seeking low distortion mappings, like is done in the parameterization problem, allowing the vertices at the boundary to move would help reaching a lower distortion level (or even an existence of a solution). Therefore, we will relax the above sufficient conditions and provide more flexible setting to guarantee that a map is injective. These conditions, that are described next, will allow certain freedom in the boundary maps while keeping the injectivity property. We start with some preparations.

As mentioned above we would like to avoid prescribing a specific boundary map $\Phi\vert_{\partial\M}$ and allow each boundary face $\sigma\in\partial\K$ that is mapped onto $\partial\OOmega$ to ``slide'' on its target face of the polytope. For that end, every boundary $d-1$-face $\sigma\in\partial\K_{d-1}$ is assigned with a target boundary face of the polytope of the same dimension $\tau\in\partial\OOmega_{d-1}$. We will denote such an assignment by a function
\begin{equation}\label{e:assignment}
\A:\partial\K_{d-1}\too\partial\OOmega_{d-1}.
\end{equation}
The idea is that every face $\sigma\in\partial\K_{d-1}$ is mapped to somewhere on the planar boundary face $\tau=\A(\sigma)\in\partial\OOmega_{d-1}$ but we don't fix a-priori where exactly.

Unfortunately, although this condition is  more flexible than fixing the boundary map it can lead to rather complicated non-convex constraints. For example, if one of the boundary faces of the polytope is non-convex this constrain will turn out non-convex as-well. See for example the polytope depicted in Figure \ref{fig:3dpolytope}.  Hence, we will suggest a relaxation of this condition: we ask all $d-1$-faces $\sigma\in\partial\K_{d-1}$ to satisfy \begin{equation}\label{e:linear_con}
\Phi(\sigma)\subset\aff(\A(\sigma)),
\end{equation}
where $\aff(B)$ means the affine closure of a set $B$, namely the smallest affine set containing $B$. In particular, these are \emph{linear} equations when formulated in the unknowns of $\Phi$ (i.e., the target location of each vertex) and hence useful in practice. As noted above, in contrast to Eq.~(\ref{e:linear_con}), the condition $\Phi(\sigma)\subset \A(\sigma)$ is not convex in the case the face $\A(\sigma)$ is not convex, which is often the case.
Even in the case the face $\A(\sigma)$ is a convex face of the polytope's boundary, the linear condition in Eq.~(\ref{e:linear_con}) would be still more efficient than the linear inequality constraints that is needed to realize the condition $\Phi(\sigma)\subset \A(\sigma)$.

It is important to note that Eq.~(\ref{e:linear_con}) implies several necessary conditions as follows. Every face $\tau\in\partial\K$ that is in the intersection of several $d-1$-faces $\sigma_1,..,\sigma_k\in\partial\K_{d-1}$, that is, $\tau=\sigma_1\cap...\cap\sigma_k$ must satisfy
\begin{equation}\label{e:linear_con2}
\Phi(\tau)\subset\aff(\A(\sigma_1))\cap\aff(\A(\sigma_2))\cap...\cap\aff(\A(\sigma_k))=\aff(\kappa),
\end{equation}
where $\kappa=\A(\sigma_1)\cap \A(\sigma_2)\cap ... \cap \A(\sigma_k)$, and we assume here that $\A$ is consistent in the sense that the intersection $\kappa$ is a $d-k$-face, and the normals to the $d-1$-faces $\A(\sigma_j)$ are linearly independent. Intuitively, $\Phi(\tau)$ is restricted to a lower dimensional affine space $\aff(\kappa)$. For later use, let us extend the function $\A$ to be defined over all boundary faces $\partial\K$ by defining $\A(\tau)=\kappa$.

\begin{wrapfigure}{l}{0.55\textwidth}
  \begin{center}\vspace{-0.8cm}\hspace{-0.1cm}
    \includegraphics[width=0.53\textwidth]{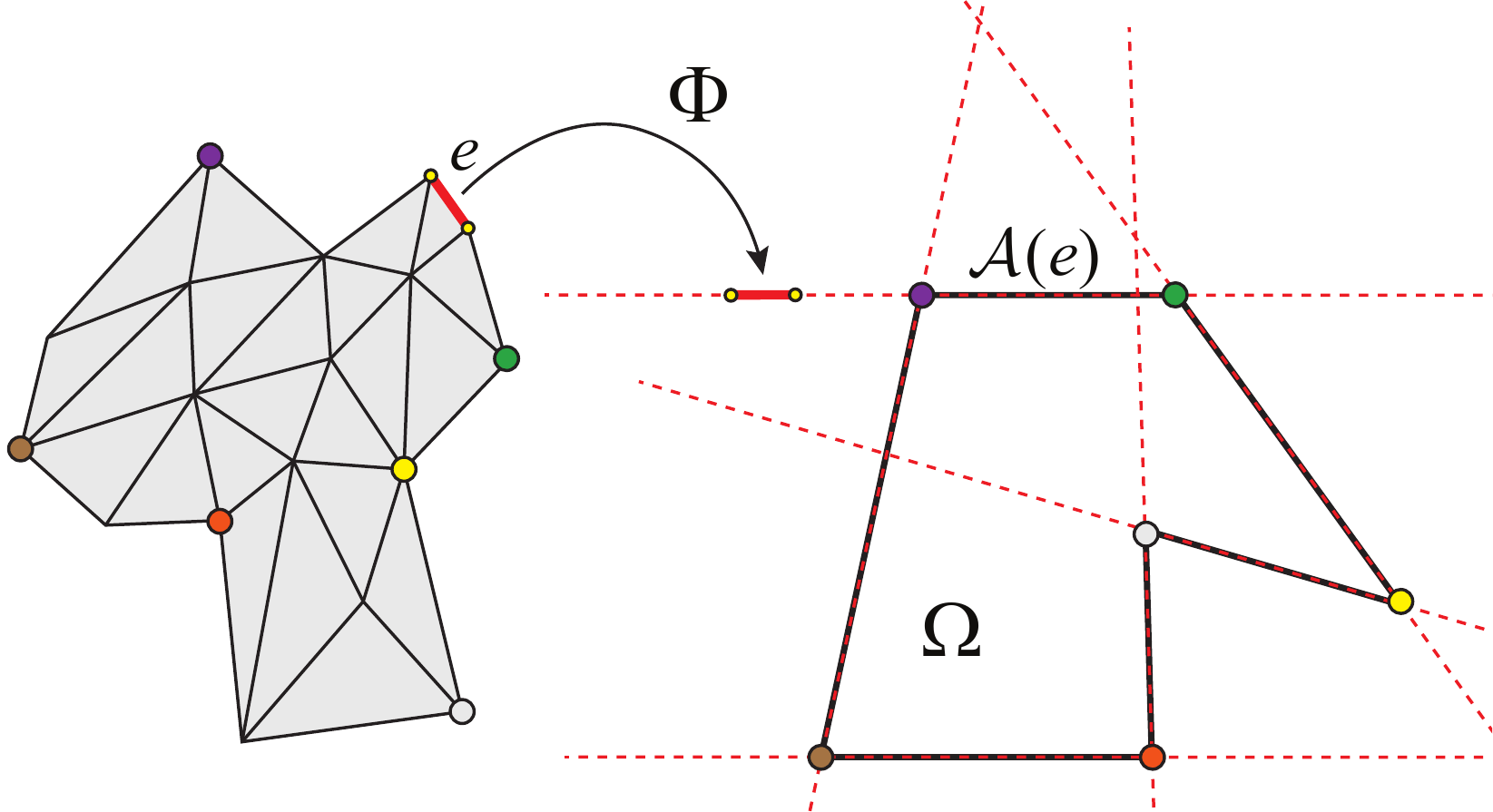}\vspace{-0.5cm}
  \end{center}
  \caption{Letting the boundary slide in the $d=2$ case.}\label{fig:slide}
\end{wrapfigure}
To give a more concrete example, imagine a triangular mesh $\M$ (i.e., $d=2$) mapped into a polygonal domain $\OOmega\subset\Real^2$ in the plane (see Figure \ref{fig:slide}). Then, Eq.~(\ref{e:linear_con}) requires each boundary edge $e\in\partial\K_1$ to be mapped somewhere on the infinite line that supports the assigned edge $\A(e)$ in the target domain's boundary, $\partial\OOmega$.
Furthermore, boundary vertices common to edges that are mapped to different (not co-linear) polygon edges are restricted to polygon's vertices. In Figure \ref{fig:slide} these are shown with colored disks.

\begin{wrapfigure}{r}{0.42\textwidth}
  \begin{center}\vspace{-1.0cm}\hspace{-0.1cm}
    \includegraphics[width=0.27\textwidth]{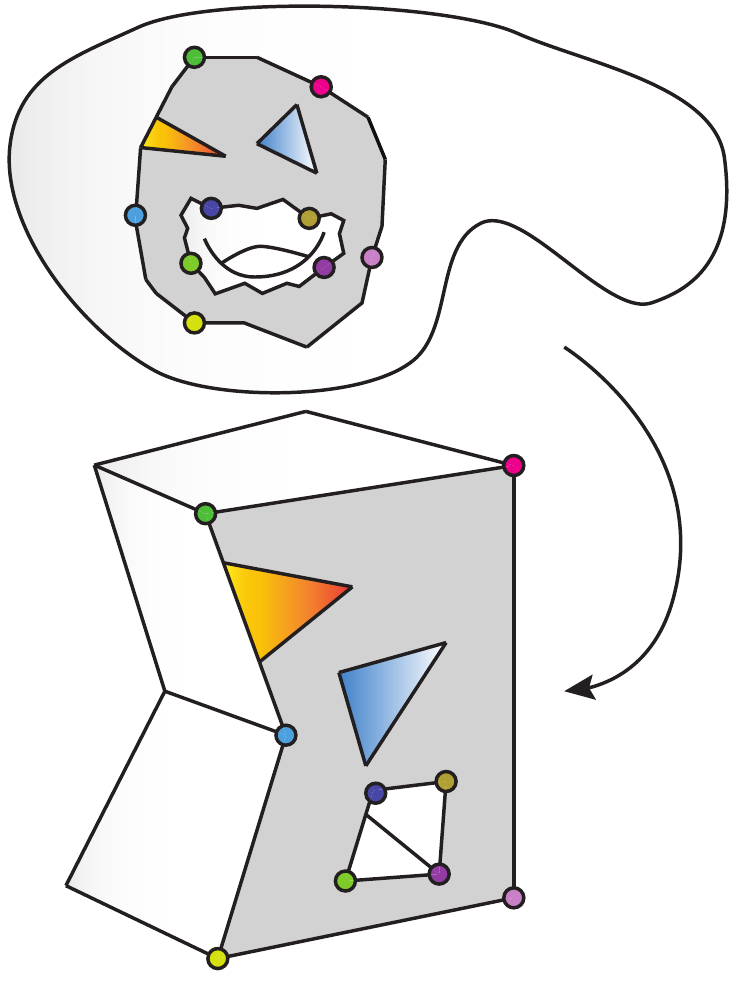}\vspace{-0.5cm}
  \end{center}
  \caption{The assignment for the $d=3$ case.}\vspace{-0.3cm}
  \label{fig:3dpolytope}
\end{wrapfigure} In the case of tetrahedral meshes (i.e., $d=3$) boundary faces $f\in \partial\K_2$ are restricted to infinite planes supporting the relevant planar polygonal faces of the polytope; boundary edges that are adjacent to not co-planar faces are mapped to inifinite lines; boundary vertices that are adjacent to three or more independent faces are mapped to fixed vertices of the polytope. Figure \ref{fig:3dpolytope} illustrates an example: it shows in grey an area of the boundary of a tetrahedral  mesh (top) that is mapped to a face of the polytope (bottom). Furthermore, it highlights the constrained vertices.

Naturally, we will need to assume that the assignment $\A$ is \emph{topologically feasible}, namely that there exists an orientation preserving boundary homeomorphism $\psi:\partial\M\too\partial\OOmega$ that satisfies the given assignment in some (arbitrary) way. Note that this homeomorphism need not be a simplicial map, and can map the boundary faces of $\partial\M$ arbitrarily onto the boundary faces of $\partial\OOmega$, all we need to make sure that topologically the provided assignment $\A$ ``makes sense''. Further note that when computing the bijective simplicial mapping in practice we do not assume we have such $\Psi$ at hand or know it in any sense, only the knowledge of its \emph{exitance} is required. In the main result of this paper we will prove that any non-degenerate orientation preserving simplicial map  $\Phi$ that satisfies Eq.~(\ref{e:linear_con}) induced by some topologically feasible assignment $\A$ is an injection over the interior of $\M$ and onto $\OOmega$:

\begin{theorem}\label{thm:bijectivity2}
Let $\M\subset \Real^n$ be a $d$-dimensional compact mesh with boundary embedded in $n\geq d$ dimensional Euclidean space, $\OOmega\subset \Real^d$ a $d$-dimensional polytope, and $\A:\partial\K_{d-1}\too\partial\OOmega_{d-1}$ a topologically feasible assignment between their boundaries. Then, any non-degenerate orientation preserving $\Phi:\M\too\Real^d$ that satisfies the linear equation (\ref{e:linear_con}) for all $\sigma\in\partial\K_{d-1}$ satisfies the following:
\begin{enumerate}
\item
$\Phi$ is injective over the interior of $\M$.
\item
$\Phi(\M)=\OOmega$.
\end{enumerate}
\end{theorem}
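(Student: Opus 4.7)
\emph{Proof plan.} The plan is to analyze $\Phi$ via the topological degree and compare it to the reference homeomorphism $\psi:\partial\M\to\partial\OOmega$ supplied by the topological feasibility of $\A$. The enabling observation is that the linear constraint (\ref{e:linear_con}), together with its consequences (\ref{e:linear_con2}), forces $\Phi(\partial\M)\subset\bigcup_{\sigma\in\partial\K_{d-1}}\aff(\A(\sigma))$. Each hyperplane $\aff(\A(\sigma))$ supports $\OOmega$ along the face $\A(\sigma)\subset\partial\OOmega$, so this union is disjoint from $\interior{\OOmega}$, and hence $\deg(\Phi,\M,y)$ is well defined for every $y\in\interior{\OOmega}$.

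Next I would introduce the straight-line homotopy $H_t(x)=(1-t)\Phi(x)+t\psi(x)$ on $\partial\M$. For any boundary face $\tau$ containing $x$, both $\Phi(x)$ and $\psi(x)$ lie in $\aff(\A(\tau))$---for $\Phi$ by the linear constraints together with the extension of $\A$ to all of $\partial\K$, and for $\psi$ because topological feasibility gives $\psi(\tau)\subset\A(\tau)$. Consequently $H_t(x)\in\aff(\A(\tau))$, and the entire homotopy avoids $\interior{\OOmega}$. Homotopy invariance of the degree then gives
\[
\deg(\Phi,\M,y)=\deg(\psi,\M,y)=1,\qquad y\in\interior{\OOmega},
\]
the last equality because $\psi$ is an orientation-preserving homeomorphism onto $\partial\OOmega$; running the same argument for $y\notin\OOmega$ lying off every supporting hyperplane gives $\deg(\Phi,\M,y)=0$ for such $y$.

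Items (1) and (2) both follow from the fact that non-degeneracy and orientation preservation of $\Phi$ on each $d$-simplex turn $\deg(\Phi,\M,y)$ into the honest count $\#\Phi^{-1}(y)$ for every $y$ avoiding the image of the $(d-1)$-skeleton. Thus every generic $y\in\interior{\OOmega}$ has exactly one preimage, and a local-openness argument around any hypothetical pair $x_1\neq x_2\in\interior{\M}$ with $\Phi(x_1)=\Phi(x_2)$ produces at least two preimages of a nearby generic point, contradicting degree one and proving (1). For (2), generic $y\notin\OOmega$ has no preimage, so $\interior{\Phi(\M)}\subset\OOmega$; since $\Phi(\M)$ is a finite union of non-degenerate closed $d$-simplices it coincides with the closure of its interior, giving $\Phi(\M)\subset\OOmega$, while positivity of the degree on $\interior{\OOmega}$ forces the reverse inclusion.

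The main technical obstacle I anticipate is bridging the generic-point statements with the full claims. Two subtleties arise: first, the straight-line homotopy must be well-posed at boundary points lying on several $(d-1)$-faces simultaneously, which requires the consistency of $\A$ to conclude $\bigcap_j\aff(\A(\sigma_j))=\aff(\A(\sigma_1)\cap\cdots\cap\A(\sigma_k))$ so that $H_t$ respects the face stratification; second, upgrading "exactly one preimage for generic $y$" to genuine injectivity on all of $\interior{\M}$---and analogously for onto---requires the local-openness step applied carefully at mesh vertices and lower-dimensional interior faces, together with the finite-union description of $\Phi(\M)$.
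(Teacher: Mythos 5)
Your overall architecture is the same as the paper's: a straight-line homotopy from $\Phi\vert_{\partial\M}$ to the reference homeomorphism $\psi$ to compute the degree, the identification of degree with the preimage count off the image of the $(d-1)$-skeleton, the open-map property to upgrade generic injectivity to injectivity on $\interior{\M}$, and a closure/density argument for surjectivity. However, there is one genuine error in a load-bearing step: you assert that each $\aff(\A(\sigma))$ \emph{supports} $\OOmega$ along the face $\A(\sigma)$ and hence that $\bigcup_{\sigma\in\partial\K_{d-1}}\aff(\A(\sigma))$ is disjoint from $\interior{\OOmega}$, so that the degree is defined and the homotopy avoids \emph{every} $y\in\interior{\OOmega}$. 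This is true only when $\OOmega$ is convex. For a non-convex polytope --- the case this theorem is really aimed at --- the affine hull of a reentrant facet cuts straight through $\interior{\OOmega}$ (this is exactly the set $Z$ of red dashed lines in Figure \ref{fig:slide}, and it is what makes the counterexample of Figure \ref{fig:counter} possible: there $\Phi(\partial\M)$ genuinely enters $\interior{\OOmega}$ while still satisfying Eq.~(\ref{e:linear_con})). Consequently $\deg(\Phi,\M,y)$ need not be defined for all $y\in\interior{\OOmega}$, and the identity $\deg=1$ can only be asserted on $\interior{\OOmega}\setminus Z$ where $Z=\bigcup_{\omega\in\partial\OOmega_{d-1}}\aff(\omega)$.

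The damage is repairable without changing your plan, and the repair is precisely what the paper's Lemma \ref{lem:deg_with_sliding_bndry} does: the homotopy $(1-t)\Phi(x)+t\psi(x)$ stays inside $Z$ (not merely inside $\Real^d\setminus\interior{\OOmega}$), so it avoids any fixed $q\notin Z$, giving $\deg=1$ on $\OOmega\setminus Z$ and $\deg=0$ on $\Real^d\setminus(\OOmega\cup Z)$. You must then enlarge your exceptional set from ``the image of the $(d-1)$-skeleton'' $Y$ to $Y\cup Z$; since this is still a finite union of affine pieces of dimension $\le d-1$, it has measure zero and your subsequent genericity, local-openness, and closure arguments go through verbatim. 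One further small gap: your reverse inclusion $\OOmega\subset\Phi(\M)$ is justified by ``positivity of the degree on $\interior{\OOmega}$,'' but the degree is undefined on $\OOmega\cap Z$; you need to add that $\Phi(\M)$ is compact, hence closed, and contains the dense set $\OOmega\setminus(Y\cup Z)$, which is the sequence-extraction argument the paper uses.
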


Note that it is a delicate point but the theorem above  does not imply that $\Phi$ is injective on the \emph{boundary} of $\M$. Although pretty rare in practice, it could happen that such a $\Phi$ is injective over $\interior{\M}$, $\Phi(\M)=\OOmega$ while $\Phi\vert_{\partial\M}$ is not injective.

\begin{wrapfigure}{r}{0.64\textwidth}
  \begin{center}\vspace{-0.4cm}\hspace{-0.4cm}
    \includegraphics[width=0.65\textwidth]{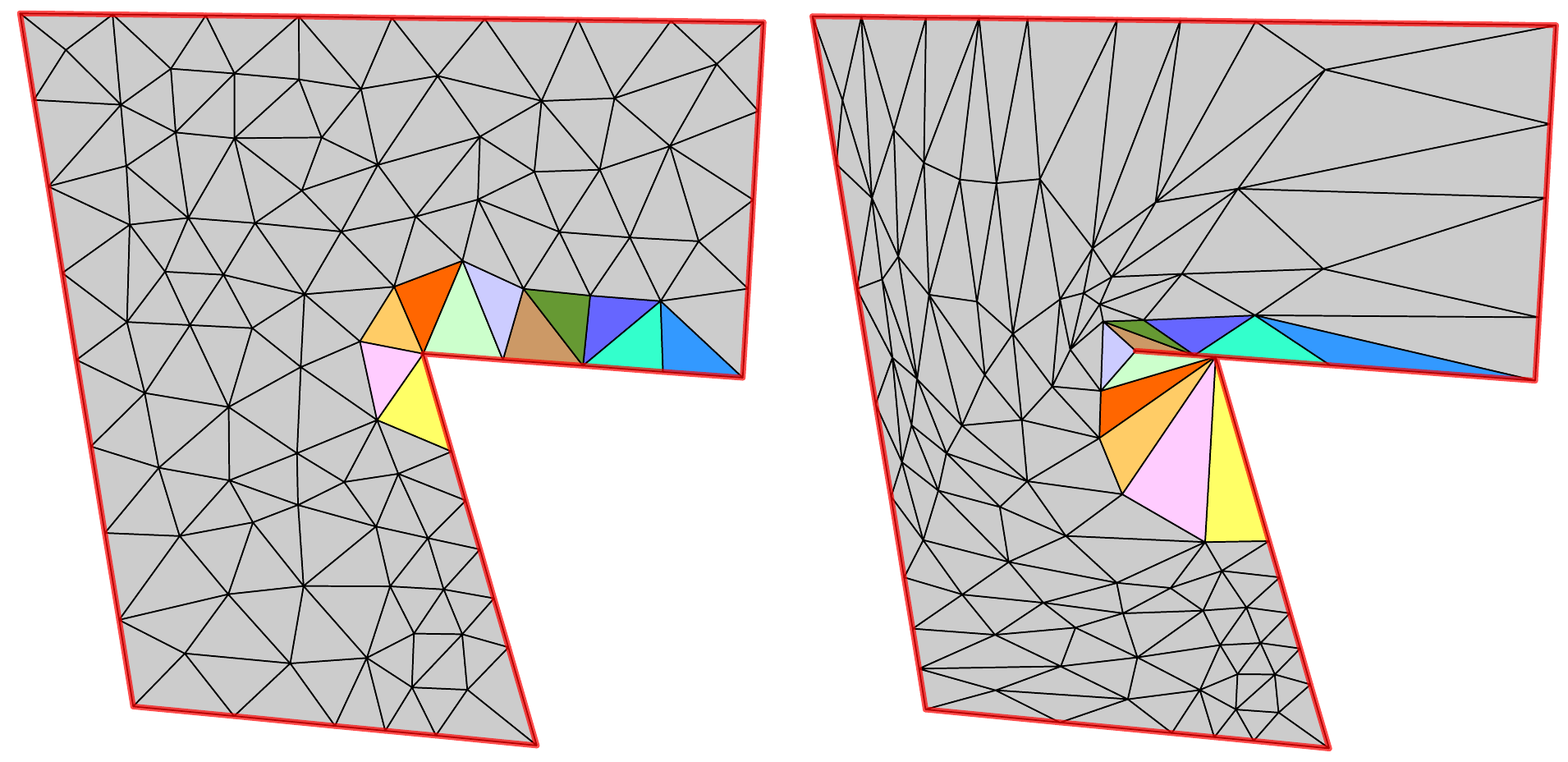}\vspace{-0.5cm}
  \end{center}
  \caption{}
  \label{fig:counter}
\end{wrapfigure}
For example, Figure \ref{fig:counter} shows a mapping of an L-shaped triangular mesh, where each boundary edge is constrained to stay within its affine hull (as Eq.~(\ref{e:linear_con}) requires), nevertheless, part of the boundary $\partial\M$ is mapped into the interior of $\OOmega$ (the entire boundary $\partial\M$ and its image $\Phi(\partial\M)$ are highlighted in red). The colors indicate corresponding triangles. In a sense what happened here is that the boundary of $\OOmega$ was extended into its interior. Note that in practice it is unlikely to come-up with such a map and here it was artificially engineered. Nevertheless, in order to guarantee that the boundary is also mapped injectively we can add the requirement that the boundary maps are orientation preserving in the following sense. First, for every boundary $d-1$-face $\tau\in\partial\OOmega_{d-1}$ all the faces $\A^{-1}(\tau)$ are mapped with their orientation preserved onto $\aff(\tau)$. Second, recursively, if $\partial\OOmega_{\ell}\ni\kappa\subset \tau\in\partial\OOmega_{\ell+1}$ then all the $\ell$-faces $\A^{-1}(\kappa)$ should preserve their orientation, where the orientation of $\kappa$ is taken to be induced by $\tau$ and the orientation of any $\sigma\in\A^{-1}(\kappa)$ is induced by a face $\alpha\in \A^{-1}(\tau)$ such that  $\sigma\subset\alpha$. A map $\Phi$ that satisfies this condition is said to be \emph{orientation preserving on the boundary}. This leads to our third and final set of sufficient conditions:

\begin{theorem}\label{thm:bijectivity3}
Let $\M\subset \Real^n$ be a $d$-dimensional compact mesh with boundary embedded in $n\geq d$ dimensional Euclidean space, $\OOmega\subset \Real^d$ a $d$-dimensional polytope, and $\A:\partial\K_{d-1}\too\partial\OOmega_{d-1}$ a topologically feasible assignment between their boundaries. Then, any non-degenerate orientation preserving $\Phi:\M\too\Real^d$ that satisfies the linear equation (\ref{e:linear_con}) for all $\sigma\in\partial\K_{d-1}$ and is orientation preserving on the boundary is a bijection between $\M$ and $\OOmega$.
\end{theorem}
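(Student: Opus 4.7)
My approach is to reduce Theorem \ref{thm:bijectivity3} to Theorem \ref{thm:bijectivity1} by upgrading the output of Theorem \ref{thm:bijectivity2} with the extra orientation-preserving-on-boundary hypothesis to conclude that the restriction $\Phi\vert_{\partial\M}:\partial\M\too\partial\OOmega$ is itself a bijection. Once this is established, Theorem \ref{thm:bijectivity1} immediately delivers that $\Phi:\M\too\OOmega$ is a bijection. The boundary bijectivity will be proved by induction on the ambient dimension $d$, the base case $d=1$ being immediate: a non-degenerate, orientation-preserving piecewise-linear map of a closed interval onto a closed interval that matches the two endpoints is necessarily monotone, hence bijective.

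\textbf{Inductive step.} For each target face $\tau\in\partial\OOmega_{d-1}$, form the subcomplex $\M_\tau\subset\partial\M$ consisting of all $\sigma\in\partial\K_{d-1}$ with $\A(\sigma)=\tau$ together with their subfaces. Using a homeomorphism $\psi:\partial\M\too\partial\OOmega$ witnessing topological feasibility of $\A$, one checks that $\M_\tau$ is a $(d-1)$-dimensional compact manifold mesh with boundary; that Eq.~(\ref{e:linear_con}) places $\Phi(\M_\tau)$ inside $\aff(\tau)\cong\Real^{d-1}$; and that the induced assignment $\A_\tau$ from $(d-2)$-faces of $\partial\M_\tau$ to $(d-2)$-faces of $\partial\tau$, prescribed by the recursive formula in Eq.~(\ref{e:linear_con2}), is topologically feasible at dimension $d-1$ with $\psi\vert_{\M_\tau}$ playing the role of the witnessing homeomorphism. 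The recursive formulation of ``orientation preserving on the boundary'' is tailored precisely so that $\Phi\vert_{\M_\tau}$ inherits non-degeneracy, consistent orientation with respect to the orientation of $\tau$ induced from $\OOmega$, and the same orientation-preserving-on-boundary property one dimension down. The inductive hypothesis therefore applies, and $\Phi\vert_{\M_\tau}:\M_\tau\too\tau$ is a bijection for every $\tau\in\partial\OOmega_{d-1}$.

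\textbf{Gluing and main obstacle.} Since distinct faces $\tau,\tau'\in\partial\OOmega_{d-1}$ share only their common lower-dimensional skeleton inside $\partial\OOmega$, and since the inductive hypothesis also yields bijections on those shared pieces (they lie in $\partial\M_\tau\cap\partial\M_{\tau'}$ and are mapped consistently by $\Phi$), the family $\set{\Phi\vert_{\M_\tau}}_{\tau\in\partial\OOmega_{d-1}}$ assembles into a bijection $\Phi\vert_{\partial\M}:\partial\M\too\partial\OOmega$, and Theorem \ref{thm:bijectivity1} completes the argument. The main obstacle is the topological bookkeeping: verifying that $\M_\tau$ is genuinely a $(d-1)$-manifold mesh with boundary (a potential worry when $\tau$ is non-convex or when several $\tau$'s meet along a complicated lower skeleton), that $\A_\tau$ is unambiguously defined and topologically feasible at the lower dimension, and that the induced orientations propagate correctly through the recursion. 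All of these should follow from a careful use of the homeomorphism $\psi$ supplied by topological feasibility of $\A$, and this is the one place where the polytopal structure of $\partial\OOmega$ genuinely enters.
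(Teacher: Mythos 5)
Your proposal follows essentially the same route as the paper's own proof: induction on $d$ with the monotonicity base case at $d=1$, restriction of $\Phi$ to the submeshes $\M_\tau$ assigned to each boundary face $\tau\in\partial\OOmega_{d-1}$, application of the inductive hypothesis to conclude that each $\Phi\vert_{\M_\tau}$ is a bijection onto $\tau$, and finally Theorem \ref{thm:bijectivity1} once $\Phi\vert_{\partial\M}$ is assembled into a bijection onto $\partial\OOmega$. The only cosmetic difference is that you frame the reduction as passing through Theorem \ref{thm:bijectivity2}, which neither you nor the paper actually uses in this argument; the gluing and topological bookkeeping you flag as the main obstacle are exactly the points the paper also leaves implicit.
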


In the following section we will develop the tools that are used for proving the above theorems. The main tool will be a pre-image counting argument that makes use of the power and elegance of the classical mapping degree. In short, the degree $\deg(\Phi_q, \partial\M)$ is an integer that counts how many times the simplicial map $\Phi$ wraps the boundary of the mesh $\partial\M$ around the point $q$, and this number equals almost everywhere to the number of pre-images $\#\set{\Phi^{-1}(q)}$, as the following theorem states:
\begin{theorem}\label{thm:main_inequality}
Let $\Phi:\M\too\Real^d$ be a non-degenerate orientation preserving simplicial map, and $\M$ a $d$-dimensional compact mesh with boundary. The number of pre-images $\#\set{\Phi^{-1}(q)}$ of a point $q\in\Real^d\setminus\Phi(\partial\M)$ satisfies
$$\#\set{\Phi^{-1}(q)} \leq \deg(\Phi_q, \partial\M).$$
If $q\in \Real^d \setminus \parr{\cup_{\tau\in\K_{d-1}}\Phi\parr{\tau}}$ the above inequality is replaced with an equality.
\end{theorem}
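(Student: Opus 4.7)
The plan is to localize the computation of $\deg(\Phi_q,\partial\M)$ around each preimage of $q$ and to exploit orientation-preservation simplex by simplex. Since $\Phi$ is non-degenerate on every $d$-simplex, its restriction to any face of $\K$ is affine and injective, so $\Phi^{-1}(q)$ meets every face in at most one point; hence $\Phi^{-1}(q)=\set{p_1,\ldots,p_k}$ is a finite set. Because $q\notin\Phi(\partial\M)$, each $p_i$ lies in $\interior{\M}$. I would pick small disjoint open neighborhoods $U_i$ of $p_i$ in $\interior{\M}$ with $q\notin\Phi(\partial U_i)$ and $\bbar{U_i}\cap\Phi^{-1}(q)=\set{p_i}$; concretely, take $U_i$ to be the open star of the minimal face of $\K$ containing $p_i$, so that $\bbar{U_i}$ is a closed PL $d$-ball and $\partial U_i$ is a PL $(d-1)$-sphere. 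By the excision/additivity property of the mapping degree,
$$\deg(\Phi_q,\partial\M)\;=\;\sum_{i=1}^{k} d_i,\qquad d_i\;:=\;\deg(\Phi_q,\partial U_i).$$

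The crux of the proof, and the step I expect to be the main obstacle, is the lower bound $d_i\geq 1$. For a regular value $q'$ sufficiently close to $q$, the local degree $d_i$ equals the signed number of preimages of $q'$ in $U_i$; by orientation-preservation on each $d$-simplex $\sigma$ of the star of $p_i$, every sign is $+1$, so $d_i$ is simply the number of such $\sigma$ whose image $\Phi(\sigma)$ contains $q'$. Rewriting this count in terms of solid angles at $q$: each non-degenerate orientation-preserving $\Phi\vert_\sigma$ makes $\Phi(\sigma)$ subtend a strictly positive solid angle at the vertex $q=\Phi(p_i)$, and summing these angles over the star of $p_i$ yields exactly $d_i\cdot\omega_d$, where $\omega_d$ denotes the total solid angle of the unit $(d-1)$-sphere in $\Real^d$. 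Strict positivity of every summand then forces this sum, and hence the integer $d_i$, to be at least $1$. The delicate point is that when $p_i$ lies on a face of dimension $<d$, $\Phi$ need not be locally injective at $p_i$, and one really must use both manifoldness of $\M$ at $p_i$ (so that $\partial U_i$ is genuinely a $(d-1)$-sphere) and orientation-preservation across the entire star in order to identify $d_i$ with the positive integer coming out of the solid-angle sum.

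Combining the two displays yields
$$\#\set{\Phi^{-1}(q)}\;=\;k\;\leq\;\sum_{i=1}^{k} d_i\;=\;\deg(\Phi_q,\partial\M),$$
which is the stated inequality. For the equality assertion, assume in addition $q\notin\bigcup_{\tau\in\K_{d-1}}\Phi(\tau)$. Every face of $\K$ of dimension strictly less than $d$ is contained in some $(d-1)$-face, so this hypothesis forces each $p_i$ into the interior of a unique $d$-simplex $\sigma_i$. On $\sigma_i$ the affine map $\Phi\vert_{\sigma_i}$ is a non-degenerate bijection onto its image, so $\Phi$ is locally a homeomorphism near $p_i$ and $d_i=1$. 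Summing the $d_i$'s then converts the previous inequality into the desired equality.
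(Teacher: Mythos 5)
Your argument is correct, and for the inequality it takes a genuinely different route from the paper's. The paper first proves the \emph{equality} for $q\notin Y:=\cup_{\tau\in\K_{d-1}}\Phi(\tau)$ by summing Lemma \ref{lem:degree_d_face} over all $d$-faces and using additivity (Proposition \ref{prop:degree_properties}, Property (\ref{item:additivity_of_deg})), exactly as you do; but it then deduces the inequality for general $q\notin\Phi(\partial\M)$ \emph{by contradiction}, using the open-map property (Lemma \ref{lem:open_map}) to push an excess of pre-images onto a nearby regular value while local constancy of $q'\mapsto\deg(\Phi_{q'},\partial\M)$ keeps the degree fixed. You instead compute directly: you isolate the finitely many pre-images, split $\deg(\Phi_q,\partial\M)$ into local degrees $d_i$ over the star of each $p_i$, and force $d_i\geq 1$ by a solid-angle count. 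Your route avoids Lemma \ref{lem:open_map} entirely and is more informative --- it identifies the defect $\deg(\Phi_q,\partial\M)-\#\set{\Phi^{-1}(q)}$ as $\sum_i(d_i-1)$, which is exactly what makes the inequality strict in the example of Figure \ref{fig:counter_example} --- at the cost of the star/link bookkeeping that the contradiction argument sidesteps.

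A few places to tighten so the argument closes within the paper's framework. The claim that $\bbar{U_i}$ is a PL $d$-ball is unnecessary (and delicate for general simplicial manifolds): in the cycle formalism it suffices to write $\partial\M=\sum_{\sigma\in\K_d}\partial\sigma$, observe that simplices containing no pre-image contribute $0$ by Lemma \ref{lem:degree_d_face}, and group the remaining terms by which $p_i$ they contain, so that $d_i=\deg(\Phi_q,\sum_{\sigma\ni p_i}\partial\sigma)$. The facts you need about the stars --- that they are pairwise disjoint, that $\bbar{U_i}\cap\Phi^{-1}(q)=\set{p_i}$, and that $q\notin\Phi(\partial\bbar{U_i})$ --- all follow from the single observation that a non-degenerate affine $\Phi\vert_\sigma$ is injective on $\sigma$, so no $d$-simplex contains two pre-images of $q$; this deserves a sentence, since without it the definition of $d_i$ is not licensed. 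Finally, your identification of $d_i$ with the unsigned pre-image count of a nearby regular value silently uses local constancy of the cycle degree in the base point $q$; the paper uses the same fact, but it should be stated.
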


\section{The Cycle Degree and Pre-image counting argument}

Our goal in this section is to prove a useful pre-image counting argument for orientation preserving simplicial maps. This is done by employing the mapping degree tool to simplicial maps restricted to cycles. This argument will be used in the subsequent section for proving the different sufficient conditions for injectivity of simplicial maps.

We will make use of the notion of degree which in essence counts how many times a map between two closed manifold "wraps" the first manifold over the second one.

\begin{wrapfigure}{r}{0.3\textwidth}
  \begin{center}\vspace{-0.0cm}\hspace{-0.1cm}
    \includegraphics[width=0.3\textwidth]{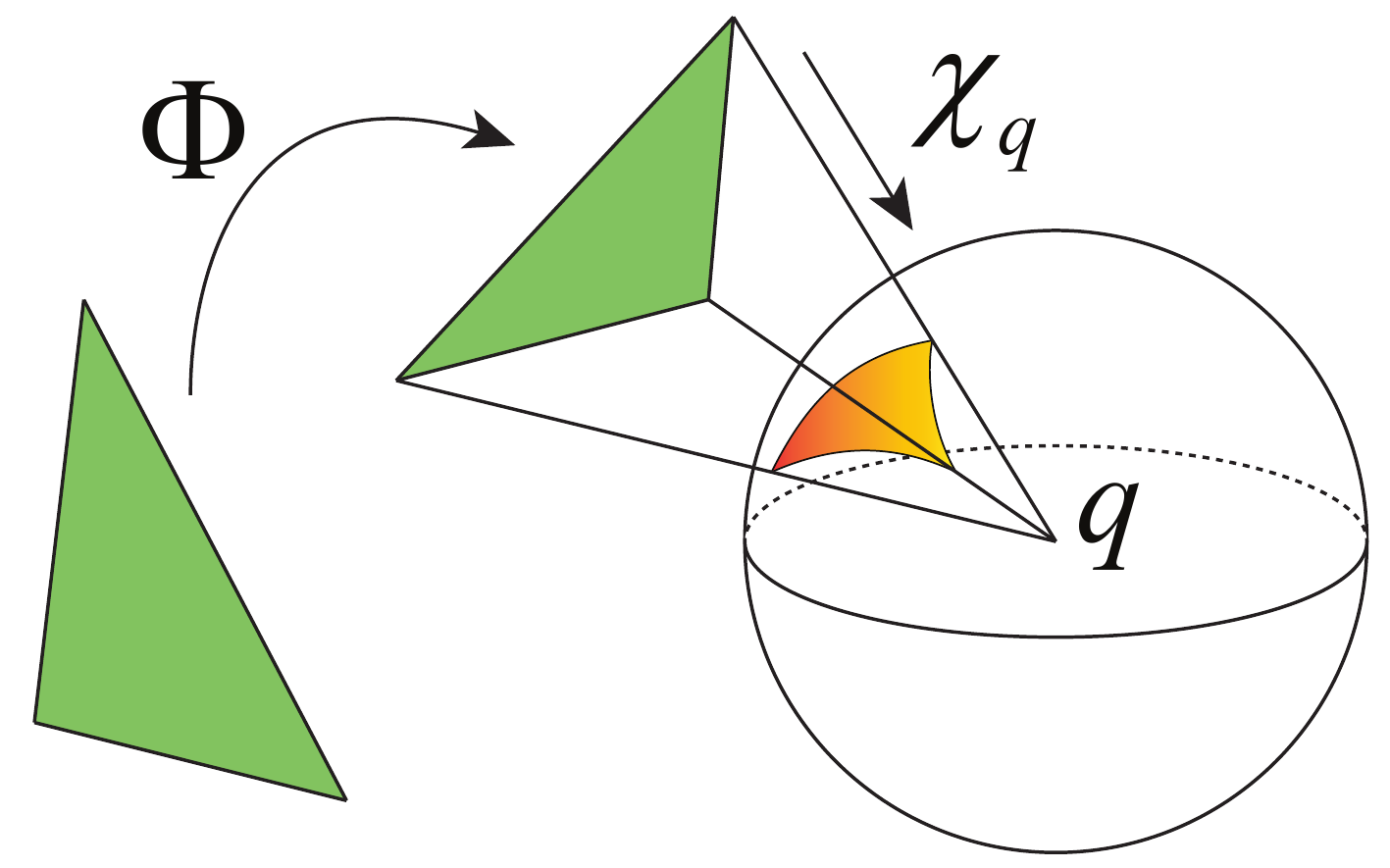}\vspace{-0.0cm}
  \end{center}
  \caption{}
  \label{fig:degree}\vspace{0.5cm}
\end{wrapfigure}

Let us define the projection onto a sphere $\chi_q:\Real^d \setminus\set{q}\too \sphere_q$, where $\sphere_q$ is the unit $d-1$-sphere centered at $q$, that is $\sphere_q=\set{p\in\Real^{d}\vert \norm{p-q}_2 = 1}$, by
$$\chi_q(p)=\frac{p-q}{\norm{p-q}_2},$$ where $\norm{\cdot}_2$ denotes the Euclidean norm in $\Real^d$. The key player in the upcoming theory is the composed map $\Phi_q=\chi_q\circ\Phi$. See Figure \ref{fig:degree} for an illustration.

\begin{wrapfigure}{l}{0.55\textwidth}
  \begin{center}\vspace{-0.5cm}\hspace{-0.3cm}
    \includegraphics[width=0.55\textwidth]{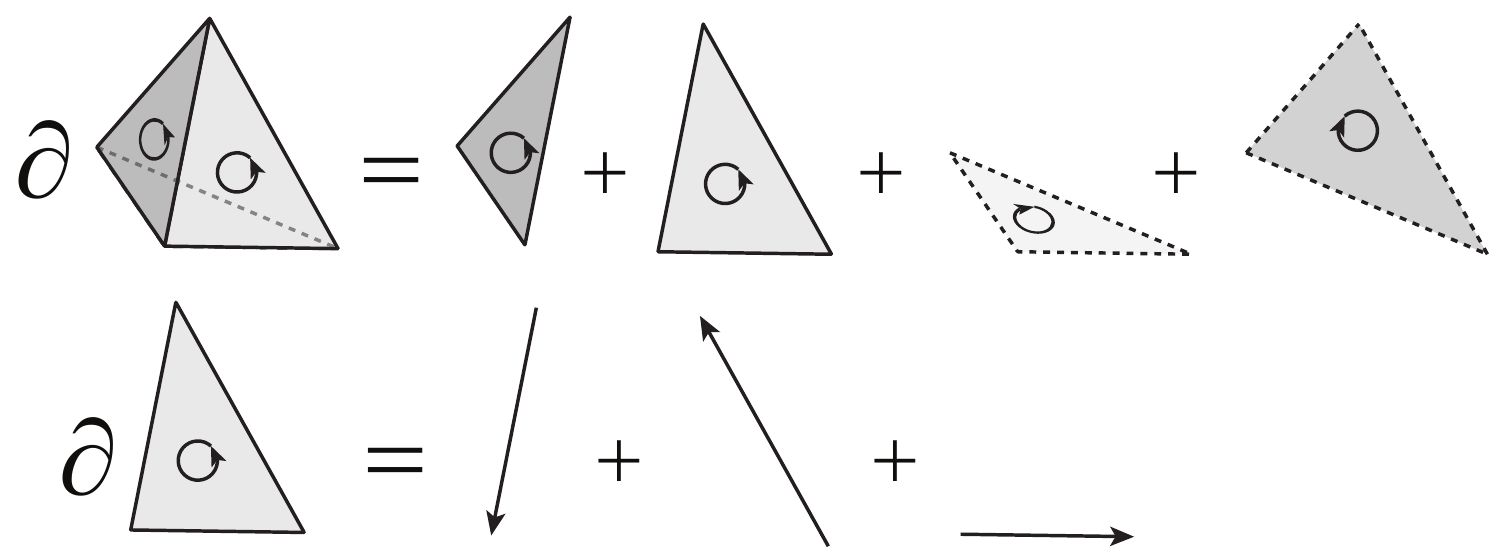}\vspace{-0.5cm}
  \end{center}
  \caption{}
  \label{fig:boundary}
\end{wrapfigure}
Before we define the notion of degree over cycles let us recall some terminology. An $\ell$-chain is a formal sum of $\ell$-faces $c=\sum_i a_i \sigma_i$, $\sigma_i\in\K_\ell$, where $a_i\in\Integer$. We denote the free abelian group of all $\ell$-chains by the symbol $\C_\ell$. Let $\partial_\ell:\C_\ell\too\C_{\ell-1}$ be the boundary operator taking $\ell$-chains to $\ell-1$-chains. Figure \ref{fig:boundary} shows an example of the boundary operator applied to $\sigma\in\K_3$ (tet, top) and to $\sigma\in\K_2$ (triangle, bottom). An $\ell$-chain $c\in\C_\ell$ is called an $\ell$-cycle if $\partial c =0$. For example, $\partial\sigma$, where $\sigma\in\K_d$, is a $d-1$-cycle as can be verified from Figure \ref{fig:boundary} for $d=3$. The sub-group of $\ell$-cycles is denoted $\ker\partial_\ell$.

We will be interested in $d-1$-cycles. Namely, $c=\sum_i a_i\sigma_i \in\ker \partial_{d-1}$. The reason is that closed (possibly self-intersecting) $d-1$ sub-surface meshes of $\M$ can be represented as $d-1$-cycles. For example, given a $d$-face $\sigma\in\K_d$, its boundary $c=\partial\sigma$ is an example of such $d-1$ sub-surface mesh and as indicated above is also a $d-1$-cycle. Furthermore, any (formal) sum of such elements, that is $c=\sum_i\partial\sigma_i$ is also an example. In, fact this is the most general type of cycles we will need to consider. We denote by $I(c)=\set{i  \vert a_i\ne 0}$ the index set of non-zero coefficients of the cycle $c=\sum_i a_i\sigma_i$.

Our goal is to define $\deg(\Phi_q,c)$ which is intuitively the degree of the map $\Phi_q$ restricted to the cycle $c$, or equivalently, how many times $\Phi_q$ wraps $c$ over $\sphere_q$. Although the notion of degree is well established for mappings between manifolds and piecewise-linear manifolds (see \cite{outerelo2009mapping} for historic overview as-well as state of the art report) we could not find in literature a direct treatment of the degree of mapping restricted to cycles. Since this notion seems very natural for analyzing simplicial mappings of meshes we develop it in detail here. We will adopt the so-called de-Rham point of view using differential forms. The reason is that it seems to give a more efficient way to define the degree on cycles and to prove its properties. It might be useful to note, however, that in the rest of this paper will only use the properties of the degree as summarized in Proposition \ref{prop:degree_properties} below.

That is, let $\omega$ be the normalized volume form on $\sphere_q$, i.e., $\int_{\sphere_q} \omega = 1$. Then we define the degree of $\Phi_q$ restricted to a $d-1$- cycle $c=\sum_i a_i \sigma_i$, $\sigma_i\in\K_{d-1}$ via
\begin{equation}\label{def:degree}
\deg(\Phi_q,c) = \int_c \Phi_q^*\,\omega = \sum_i a_i \int_{\sigma_i} \Phi^*_q \,\omega.
\end{equation}
Note that $\int_{\sigma_i} \Phi^*_q \,\omega$ is simply the (normalized) signed area of $\Phi_q(\sigma_i)$. That is, the sign of  $\int_{\sigma_i} \Phi^*_q \,\omega$ is $+1$ if $\Phi_q$ preserves the orientation of $\sigma_i$ when mapping it to the sphere, and $-1$ if it inverts it.

In order to see that Eq.~(\ref{def:degree}) actually well defines an integer and give an alternative way of computing this number we will adapt some arguments from classical degree theory of smooth mappings. For the sake of being self-contained we repeat some of the argumentation in our context. First, let $\Gamma^{d-1}(\sphere_q)$ denote the linear space of differential $d-1$-forms on the $d-1$-sphere $\sphere_q$. Given $\omega\in\Gamma^{d-1}(\sphere_q)$ we want to consider the relation between the two real scalars $\int_{\sphere_q} \omega$ and $\int_c \Phi_q^*\,\omega=\sum_i a_i \int_{\sigma_i} \Phi^*_q \,\omega$. Let us denote this relation by $T:\Real\too\Real$. Namely we define $T$ via $T(\int_{\sphere_q}\omega)=\int_c \Phi_q^*\omega$.

First, to see that this relation is well-defined take $\omega_1,\omega_2\in\Gamma^{d-1}(\sphere_q)$ such that $\int_{\sphere_q}\omega_1=\int_{\sphere_q}\omega_2$. Since $\int_{\sphere_q}\omega_1-\omega_2=0$ and $\sphere_q$ is connected, compact, and oriented manifold there exists $\eta\in\Gamma^{d-2}(\sphere_q)$ such that $\omega_1-\omega_2=d\eta$ (the $d-1$ co-homology group in this case $H^{d-1}(\sphere_q)\cong \Real$, see for example pages 268--269 and Theorem 9  in \cite{spivak1979comprehensive}).

Hence,
\begin{eqnarray*}
\int_c\Phi_q^*\omega_1 - \int_c\Phi_q^*\omega_2 &=& \sum_i a_i \int_{\sigma_i} \Phi_q^*(\omega_1-\omega_2) \\
&=& \sum_i a_i \int_{\sigma_i} \Phi_q^*d\eta \\
&\overset{\text{$d$ commutes with pull-back}}=& \sum_i a_i \int_{\sigma_i} d\Phi_q^*\eta \\
&\overset{\text{Stokes}}=& \sum_i a_i \int_{\partial\sigma_i} \Phi_q^*\eta \\
&\overset{\text{$c$ is a cycle}}=& 0,
\end{eqnarray*}
where we used the fact that the operator $d$ commutes with the pull-back operation, Stokes theorem, and the fact that $c$ is a cycle. The fact that $H^{d-1}(\sphere_q)\cong \Real$ also implies that for every $\alpha\in\Real$ there exists $\omega\in\Gamma^{d-1}(\sphere_q)$ such that $\int_{\sphere_q}\omega=\alpha$. That is, the domain of $T$ is the whole real line $\Real$, and it is well-defined.

We have seen that $T:\Real\too\Real$ is well-defined over the whole real line. Now we show it is linear. Taking $\omega_1,\omega_2\in\Gamma^{d-1}(\sphere_q)$, and scalars $\alpha,\beta\in\Real$ we have for  $\omega=\alpha\omega_1+\beta\omega_2$ that
\begin{eqnarray*}
T\parr{\alpha\int_{\sphere_q}\omega_1+\beta\int_{\sphere_q}\omega_2}&=& T\parr{\int_{\sphere_q} \alpha\omega_1+\beta\omega_2} \\ &=&\int_c\Phi_q^*\parr{\alpha\omega_1+\beta\omega_2}\\
\\ &=&\alpha\int_c\Phi_q^*\omega_1 + \beta\int_c\Phi_q^*\omega_2\\
&=& \alpha T\parr{\int_{\sphere_q}\omega_1} + \beta T\parr{\int_{\sphere_q}\omega_2}.
\end{eqnarray*}
This implies that $T:\Real\too\Real$ is linear so it has to be of the form
\begin{equation}\label{eq:T_linear_d}
\int_c\Phi_q^*\omega = T\parr{\int_{\sphere_q}\omega} = d \int_{\sphere_q}\omega,
\end{equation}
where $d$ is some constant independent of the choice of $w\in\Gamma^d(\sphere_q)$. Since this is true for all forms in $\omega\in\Gamma^d(\sphere_q)$ we get that $d=\deg(\Phi_q,c)$ as defined in Eq.~(\ref{def:degree}). It might seem that we have not accomplished anything with this definition of $T$, however, having at our disposal Eq.~(\ref{eq:T_linear_d}) (with constant $d=\deg(\Phi_q,c)$, regardless of the choice of $\omega$) will help us easily prove all the nice property of the degree on cycles. In particular, we next prove that the degree (although still not clear from the above definition) is always an integer, coincides with the classical Brouwer degree in case the cycle $c$ represent a polyhedral surface,  and show another useful way to calculate it.

\begin{proposition}\label{prop:degree_properties}
Let $\Phi:\M\too\Real^d$ be a simplicial map of a $d$-dimensional compact mesh $\M$, $\chi_q:\Real^d\setminus\set{q}\too \sphere_q$ a projection on the sphere centered at $q$, and $\Phi_q=\chi_q\circ\Phi$ their composition. Furthermore, let $c=\sum_i a_i\sigma_i\in\ker\partial_{d-1}$ be a $d-1$-cycle in $\M$. The number $\deg(\Phi_q,c)$ defined in Eq.~(\ref{def:degree}) satisfies the following properties:

\begin{enumerate}

\item \label{item:deg_integer}
It is an integer.

\item \label{item:brouwer}
In case the cycle $c$ represents a $d-1$-piecewise-linear manifold (e.g., a polygon when $d=2$, and a polyhedral surface when $d=3$) this notion of degree coincides with the classical Brouwer's degree.

\item \label{item:alternative_formula_deg}
For any $p\in\sphere_q\setminus \bigcup_{i\in I(c)}\Phi_q(\partial\sigma_i)$, $$\deg(\Phi_q,c)=\sum_{i\in I(c) \ s.t. \ p\in \Phi_q(\sigma_i) }a_i \, \sign_{\sigma_i}(\Phi_q),$$ where $\sign_{\sigma_i}(\Phi_q)=1$ if $\Phi_q$ preserves the orientation of $\sigma_i$ as it maps it onto the sphere $\sphere_q$, and $\sign_{\sigma_i}(\Phi_q)=-1$ if it inverts it, and $I(c)=\set{i  \vert a_i\ne 0}$.

\item\label{item:additivity_of_deg}
Let $c'=\sum_i a'_i \sigma_i \in \ker\partial_{d-1}$ be another $d-1$-cycle in $\M$ then $$\deg(\Phi_q,c+c')=\deg(\Phi_q,c)+\deg(\Phi_q,c').$$

\end{enumerate}
\end{proposition}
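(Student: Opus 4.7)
The plan is to prove the four items in the order (4), (3), (1), (2), since (3) is the technical heart while the other three are either immediate from the definition or easy corollaries of (3).

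Item (4) is essentially built into the setup: the assignment $c \mapsto \int_c \Phi_q^*\omega$ is by definition $\mathbb{Z}$-linear in the chain, so fixing any single $\omega \in \Gamma^{d-1}(\sphere_q)$ with $\int_{\sphere_q}\omega = 1$ and applying \eqref{def:degree} gives $\deg(\Phi_q, c+c') = \deg(\Phi_q,c)+\deg(\Phi_q,c')$ without further work.

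For item (3), the idea is to exploit the freedom in \eqref{eq:T_linear_d}: since the integral $\int_c \Phi_q^*\omega$ depends only on $\int_{\sphere_q}\omega$, I am free to choose $\omega$ adapted to the point $p$. The set $E = \bigcup_{i \in I(c)} \Phi_q(\partial\sigma_i)$ is a finite union of images of $(d-2)$-simplices on the $(d-1)$-sphere, so it is closed and of measure zero. I pick an open neighborhood $U \subset \sphere_q$ of $p$ with $U \cap E = \emptyset$, and a smooth $(d-1)$-form $\omega$ supported in $U$ with $\int_{\sphere_q}\omega = 1$. Then by \eqref{eq:T_linear_d},
\begin{equation*}
\deg(\Phi_q,c) \;=\; \sum_{i \in I(c)} a_i \int_{\sigma_i} \Phi_q^*\omega,
\end{equation*}
and the work is to evaluate each summand. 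Fix $i \in I(c)$. If $\Phi|_{\sigma_i}$ is degenerate (its affine image lies in a proper subspace), then $\Phi_q(\sigma_i)$ has measure zero in $\sphere_q$ and $\int_{\sigma_i}\Phi_q^*\omega = 0$; moreover $p \notin \Phi_q(\sigma_i)$ for generic $p$, and if this fails we can shrink $U$ further. If $\Phi|_{\sigma_i}$ is non-degenerate, then $\Phi_q|_{\sigma_i}$, being the composition of the affine map $\Phi|_{\sigma_i}$ with the radial projection $\chi_q$ (a diffeomorphism from any hyperplane missing $q$ onto an open subset of $\sphere_q$), is a local diffeomorphism on the interior of $\sigma_i$. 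Shrinking $U$ if necessary, either $p \notin \Phi_q(\sigma_i)$, in which case $(\Phi_q|_{\sigma_i})^{-1}(U) = \emptyset$ and the integral vanishes, or $p$ lies in the interior of $\Phi_q(\sigma_i)$ and $(\Phi_q|_{\sigma_i})^{-1}(U)$ is a single open set mapped diffeomorphically onto $U$. In the latter case the change of variables gives $\int_{\sigma_i}\Phi_q^*\omega = \sign_{\sigma_i}(\Phi_q)\cdot \int_{\sphere_q}\omega = \sign_{\sigma_i}(\Phi_q)$. Summing produces the claimed formula.

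Item (1) is then immediate: the right-hand side of the formula in (3) is a finite $\mathbb{Z}$-linear combination of $\pm 1$'s, so $\deg(\Phi_q,c) \in \mathbb{Z}$. Item (2) also follows: when $c$ represents an oriented closed $(d-1)$-dimensional PL manifold, it is a sum $\sum_i \sigma_i$ with $a_i = 1$ under induced orientations, and the formula from (3) reduces to the standard signed count of preimages of a regular value, which is the classical definition of the Brouwer degree for piecewise-linear maps between oriented closed PL manifolds. The main obstacle in this whole argument is the local analysis behind item (3): one has to verify carefully that for a sufficiently small neighborhood $U$ of a regular value $p$, each restriction $\Phi_q|_{\sigma_i}$ contributes either $0$ or $\sign_{\sigma_i}(\Phi_q)$, and nothing else; the fact that the hypothesis $p \notin \bigcup_{i \in I(c)} \Phi_q(\partial\sigma_i)$ precludes contributions from boundary strata is what makes this clean.
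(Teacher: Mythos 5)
Your proposal is correct and follows essentially the same route as the paper: both prove item (3) by feeding a bump form concentrated near the regular value $p$ into the relation \eqref{eq:T_linear_d} and evaluating the pullback integral as a signed count of preimages, then deduce (1) from (3), get (4) from linearity of the definition, and obtain (2) by matching the formula in (3) with the classical Brouwer degree. The only cosmetic difference is that you organize the local computation simplex-by-simplex (and explicitly dispose of degenerate faces) while the paper organizes it around the preimages $x_1,\dots,x_n$ of $p$; the content is identical.
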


\begin{proof}
We start with proving Property (\ref{item:alternative_formula_deg}).\\
Take some $p\in\sphere_q\setminus \bigcup_{i\in I(c)}\Phi_q(\partial\sigma_i)$. Denote its ``pre-images'' on $c$ under $\Phi_q$ by $x_1,x_2,..,x_n$. That is, there exists some $\sigma_{i_j}$, $i_j\in I(c)$, $j=1..n$, such that $p=\Phi_q(x_j)$, and $x_j\in \sigma_{i_j}$.

If there are no such pre-images to $p$ then we can take a form $\omega\in\Gamma^{d-1}(\sphere_q)$ such that $\int_{\sphere_q}\omega>0$ and $\omega=0$ outside a small neighborhood $V$ of $p$, where all points in $V$ do not have pre-images on $c$ under $\Phi_q$ (existence of such $V$ is implied by the continuity of $\Phi_q$ and the compactness of $\M$). Using $\omega$ in Eq.~(\ref{eq:T_linear_d}) shows that $d=\deg(\Phi_q,c)=0$. This implies Property (\ref{item:alternative_formula_deg}) in this case.

We now assume there exists at-least one pre-image, i.e.  $n\geq 1$. There exist some neighborhood $V$ of $p$, $p\in V\subset \sphere_q$, and disjoint neighborhoods $U_j$ of $x_j$, $x_j\in U_j \subset \sigma_{i_j}$ such that $\Phi_q^{-1}(V)=\cup_{j=1}^n U_j$, and $\Phi_q(U_j)=V$. Take some $\omega\in\Gamma^{d-1}(\sphere_q)$ such that $\int_{\sphere_q}\omega>0$ and $\omega=0$ outside  $V$. Then,
\begin{eqnarray*}
\int_{c} \Phi_q^*\omega &=& \sum_i a_{i} \int_{\sigma_i} \Phi_q^*\omega \\ &=& \sum_{j=1}^n a_{i_j} \int_{U_j} \Phi_q^*\omega \\ &=& \sum_{j=1}^n a_{i_j} \sign_{\sigma_{i_j}}(\Phi_q)\parr{\int_V \omega}.
\end{eqnarray*}
On the other hand Eq.~(\ref{eq:T_linear_d}) implies that $\int_{c} \Phi_q^*\omega=\parr{\int_V \omega}  \deg(\Phi_q,c)$. Combining and dividing by $\int_V \omega$ we get property (\ref{item:alternative_formula_deg}).

Property (\ref{item:deg_integer}) follows from Property (\ref{item:alternative_formula_deg}) since all $a_i$ and $\sign_\sigma(\Phi_q)$ are integers so the degree is an integer.

To prove property (\ref{item:additivity_of_deg}) we will simply use the definition in Eq.~(\ref{def:degree}): let $c'=\sum_i a_i'\sigma_i$ be a $d-1$-cycle, and $\omega$  the normalized volume form on $\sphere_q$,
\begin{eqnarray*}
\deg(\Phi_q,c+c') &=& \sum_i \parr{a_i+a'_i} \int_{\sigma_i} \Phi^*_q \,\omega \\  &=& \sum_i a_i \int_{\sigma_i} \Phi^*_q \,\omega + \sum_i a'_i \int_{\sigma_i} \Phi^*_q \,\omega \\ &=& \deg(\Phi_q,c)+\deg(\Phi_q,c').
\end{eqnarray*}
Lastly, property (\ref{item:brouwer}) can be proved by noting that property (\ref{item:alternative_formula_deg}) coincides with the classical definition of Brouwer's degree in the case $c$ is a cycle representing a $d-1$ dimensional piecewise-linear manifold.
\end{proof}

In the next lemma we compute the cycle degree for a simple $d-1$-cycle, that is the boundary of a $d$-face.

\begin{lemma}\label{lem:degree_d_face}
Let $\Phi:\M\too\Real^d$ be a non-degenerate orientation preserving simplicial map of a $d$-dimensional compact mesh $\M$.  For all $\sigma\in\K_d$ such that $q\notin\Phi(\partial\sigma)$ there exists $$\deg(\Phi_q,\partial\sigma) = \begin{cases} 1 & q\in\Phi(\sigma) \\ 0 & q\notin\Phi(\sigma) \end{cases}.$$
\end{lemma}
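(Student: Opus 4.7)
The plan is to apply Property (\ref{item:alternative_formula_deg}) of Proposition \ref{prop:degree_properties} and reduce the computation to a ray-shooting argument in $\Real^d$. Write $\partial\sigma = \sum_i \epsilon_i \tau_i$, where the $\tau_i \in \K_{d-1}$ are the codimension-one faces of $\sigma$ and $\epsilon_i \in \set{-1,+1}$ compares their fixed orientation to the orientation induced from $\sigma$. Pick a regular value $p \in \sphere_q \setminus \bigcup_i \Phi_q(\partial\tau_i)$; such $p$ exists because the excluded set has measure zero on $\sphere_q$. Property (\ref{item:alternative_formula_deg}) then gives
\begin{equation*}
\deg(\Phi_q,\partial\sigma) \;=\; \sum_{i \,:\, p \in \Phi_q(\tau_i)} \epsilon_i \,\sign_{\tau_i}(\Phi_q).
\end{equation*}

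Next I would translate the condition $p \in \Phi_q(\tau_i)$ into the statement that the open ray $R_p = \set{q + t(p-q) : t>0}$ meets $\Phi(\tau_i)$. Because $\Phi|_\sigma$ is a non-degenerate orientation-preserving affine map, $\Phi(\sigma)$ is a genuine $d$-simplex whose $d+1$ boundary faces $\Phi(\tau_i)$ inherit outward-pointing orientations from $\Phi(\sigma)$. The central bookkeeping step is to verify that, under these conventions, $\epsilon_i \sign_{\tau_i}(\Phi_q)$ equals $+1$ when $R_p$ crosses $\Phi(\tau_i)$ from inside $\Phi(\sigma)$ to outside, and $-1$ when it crosses from outside to inside. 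This is because $\chi_q$ restricted to $\Phi(\tau_i)$ preserves orientation exactly when the outward normal of $\Phi(\tau_i)$ (relative to $\Phi(\sigma)$) points away from $q$, i.e., when $R_p$ exits $\Phi(\sigma)$ across that face; when it enters, the orientation is reversed.

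With this geometric dictionary in hand, the two cases are immediate. If $q \in \Phi(\sigma)$, then the hypothesis $q \notin \Phi(\partial\sigma)$ forces $q$ to lie in the interior of $\Phi(\sigma)$; the ray $R_p$ leaves this convex $d$-simplex through exactly one face (at the regular value $p$), contributing $+1$ to the sum, so $\deg(\Phi_q,\partial\sigma)=1$. If $q \notin \Phi(\sigma)$, then $R_p$ either misses $\Phi(\sigma)$ entirely (sum is $0$) or crosses the convex simplex in a single segment, entering through one face and exiting through another, contributing $-1 + 1 = 0$. In both subcases $\deg(\Phi_q,\partial\sigma)=0$, which completes the proof. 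The main obstacle is the sign bookkeeping in the middle paragraph; once the identification of $\epsilon_i \sign_{\tau_i}(\Phi_q)$ with the outward/inward crossing index is established, the remainder is an elementary ray-convex-simplex intersection count.
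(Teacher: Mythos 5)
Your proof is correct and follows essentially the same route as the paper: both reduce the computation to the regular-value formula of Proposition \ref{prop:degree_properties}, Property (\ref{item:alternative_formula_deg}), applied to the faces of the convex $d$-simplex $\Phi(\sigma)$, with the same sign identification for the exit face when $q\in\interior{\Phi(\sigma)}$. The only (cosmetic) divergence is in the case $q\notin\Phi(\sigma)$, where the paper uses a separating hyperplane to pick a direction $p$ with no preimages at all (so the sum is empty), whereas you let the generic ray enter and exit and cancel the $-1$ and $+1$ contributions; both are valid.
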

\begin{proof}
If $q\notin\Phi(\sigma)$ there exist a hyperplane separating the point $q$ and the convex set (tet) $\Phi(\sigma)$. This implies that at-least half of the sphere $\sphere_q$ has no pre-images by $\Phi_q$. Proposition \ref{prop:degree_properties}, property (\ref{item:alternative_formula_deg}) then implies that $\deg(\Phi_q,\partial\sigma)=0$.\\ In the case $q\in\interior{\Phi(\sigma)}$, one can pick a point $p\in\sphere_q$ such that $p\in\interior{\Phi_q(\tau)}$, where $\tau\subset\partial\sigma$. Since $\Phi(\sigma)$ is convex, $\tau$ is unique. Since $\Phi\vert_\sigma$ is orientation preserving and non-degenerate $\sign_\tau(\Phi_q)=1$. Using Proposition \ref{prop:degree_properties}, property (\ref{item:alternative_formula_deg}) again implies  $\deg(\Phi_q,\partial\sigma)=1$.

\end{proof}

Before we prove the main result of this section, namely the pre-image counting argument, let us mention a useful property of orientation preserving simplicial maps, namely that they are \emph{open} maps. That is, mapping open sets to open sets.
This property, that is proved in the appendix is used later on to extend injective mappings over zero-measure sets.

\begin{lemma}\label{lem:open_map}
Let $\Phi:\M\subset \Real^n\too \Real^d$, $n\geq d$, be a non-degenerate orientation preserving simplicial map of a compact $d$-dimensional mesh $\M$ into $\Real^d$. Then $\Phi$ is an open map.
\end{lemma}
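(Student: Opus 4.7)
The plan is to show that at every interior point $x \in \mathrm{int}(\M)$, $\Phi$ maps every open neighborhood $U$ of $x$ to a set containing an open neighborhood of $q = \Phi(x)$ in $\Real^d$; this gives openness on $\mathrm{int}(\M)$, and boundary points admit a similar argument using a half-ball. Inside any such $U$, I would construct a small closed PL $d$-ball $B = \bigcup_i B_i$, where each $B_i$ is a small $d$-simplex inside the $d$-face $\sigma_i$ incident to $x$, with $x$ a common vertex of all the $B_i$. The topological boundary $\partial B$ is a PL $(d-1)$-sphere, and as a chain $c_B = \sum_i \partial B_i$ once the internal faces shared between neighboring $B_i$ cancel in pairs due to the consistent orientation of $\M$.

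The key step is to compute $\deg(\Phi_q, c_B)$ and show it is a positive integer. Since each $\Phi|_{B_i}$ is a non-degenerate affine bijection sending $x$ to $q$, the only preimage of $q$ inside $B_i$ is $x$; because $x \notin \partial B$, we have $q \notin \Phi(\partial B)$, so $\deg(\Phi_q, c_B) = \int_{c_B} \Phi_q^* \omega$ is well-defined. I would analyze this integral face by face. On each $(d-1)$-face $\tau$ of some $\partial B_i$ containing $x$, the ray $y(t) = x + tw$ inside $\tau$ gives $\chi_q(\Phi(y(t))) = A_i w / \norm{A_i w}_2$ (with $A_i$ the linear part of $\Phi|_{\sigma_i}$), which is independent of $t > 0$; hence $\chi_q \circ \Phi$ collapses $\tau$ to a $(d-2)$-dimensional subset of $\sphere_q$, making $\Phi_q^* \omega$ vanish identically on $\tau$. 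The single remaining face of $B_i$, opposite to $x$, is mapped injectively onto a region of $\sphere_q$ of strictly positive normalized measure $\alpha_i$, with the correct positive sign by the non-degeneracy and orientation preservation of $\Phi|_{\sigma_i}$. Summing, $\deg(\Phi_q, c_B) = \sum_i \alpha_i > 0$, and by Proposition \ref{prop:degree_properties}(\ref{item:deg_integer}) this is an integer, hence at least $1$.

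To conclude, let $V$ denote the connected component of $\Real^d \setminus \Phi(\partial B)$ containing $q$. By the local constancy of the degree, $\deg(\Phi_{q'}, c_B) \geq 1$ for every $q' \in V$. For generic $q' \in V$ avoiding the measure-zero set $\bigcup_i \Phi(\partial B_i)$, Lemma \ref{lem:degree_d_face} applies termwise to give $\deg(\Phi_{q'}, c_B) = |\{i : q' \in \Phi(B_i)\}| \geq 1$, so $q' \in \Phi(B)$. These generic $q'$ are dense in $V$, and $\Phi(B)$ is compact, so $V \subset \Phi(B) \subset \Phi(U)$, placing $q$ in the open set $V \subset \Phi(U)$. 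I expect the main obstacle to be the rigorous justification that the singular contributions of $\Phi_q^* \omega$ on faces meeting $x$ vanish, and that the surviving solid-angle contributions sum to a true integer degree rather than a mere positive real.
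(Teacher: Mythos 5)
Your argument is essentially correct, but it takes a different route from the paper's. The paper does not build a local PL ball or use its own cycle degree here at all: it sets $W=U\cap\interior{R_x}$ where $R_x$ is the 1-ring of $x$, invokes the classical Brouwer degree $\deg(\Phi,W,p)$ of the restricted map on the open set $W$ (via the axiomatic F\"uhrer characterization), shows it is $\geq 1$ by evaluating at a nearby generic point $q'=\Phi(x')\notin Y=\cup_{\tau\in\K_{d-1}}\Phi(\tau)$ where orientation preservation forces every preimage to contribute $+1$, and then uses homotopy invariance plus the existence-of-solutions axiom to cover a whole neighborhood $V$ of $\Phi(x)$. You instead stay inside the paper's own machinery: you triangulate a small star around $x$, take the cycle $c_B=\sum_i\partial B_i$, compute $\deg(\Phi_q,c_B)$ directly at $q=\Phi(x)$ as a sum of positive solid angles, and then run the counting argument of Theorem \ref{thm:main_inequality} locally. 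Your version is more self-contained (no external degree axioms), at the price of having to justify that Proposition \ref{prop:degree_properties} and Lemma \ref{lem:degree_d_face} apply to simplices that are not faces of $\K$; the clean way to do this is to pass to a simplicial subdivision of $\M$ in which the $B_i$ are genuine $d$-faces --- $\Phi$ remains simplicial, non-degenerate and orientation preserving on any subdivision, so all the cited results transfer verbatim. The paper's evaluation at a nearby generic point also quietly sidesteps the singular-integrand worry you flag; in your setup that worry disappears anyway because the faces of the $B_i$ containing $x$ cancel in the chain $c_B$, so they carry coefficient zero and never enter the defining sum.

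Two points need repair. First, your construction ``one small simplex $B_i$ per incident $d$-face, all sharing the vertex $x$'' does not produce a neighborhood of $x$: if $x$ lies in the interior of a single $d$-face, one simplex with $x$ as a vertex has $x$ on its boundary, and similarly when $x$ is interior to a lower-dimensional face you need several cone simplices per incident $d$-face. You should take $B$ to be the full simplicial star of $x$ in a subdivision (equivalently, the cone from $x$ over a triangulated small PL sphere in $\interior{R_x}$); then the internal faces do cancel in pairs by consistent orientation and $\partial B$ is the expected sphere. Second, drop the claim that ``boundary points admit a similar argument using a half-ball'': for $x\in\partial\M$ the image $\Phi(U)$ of a relative neighborhood genuinely fails to contain a $\Real^d$-neighborhood of $\Phi(x)$ (consider the identity map), so no such argument exists. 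The lemma, as stated and as used in the paper, only requires that images of open subsets of $\interior{\M}$ be open in $\Real^d$, which is exactly what your interior-point argument delivers.
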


%
%
%

We now get to the main theorem of this section:

\textbf{Theorem \ref{thm:main_inequality}.}\textit{
Let $\Phi:\M\too\Real^d$ be a non-degenerate orientation preserving simplicial map, and $\M$ a $d$-dimensional compact mesh with boundary. The number of pre-images $\#\set{\Phi^{-1}(q)}$ of a point $q\in\Real^d\setminus\Phi(\partial\M)$ satisfies
$$\#\set{\Phi^{-1}(q)} \leq \deg(\Phi_q, \partial\M).$$
If $q\in \Real^d \setminus \parr{\cup_{\tau\in\K_{d-1}}\Phi\parr{\tau}}$ the above inequality is replaced with an equality.}

Before proving the theorem, note that the inequality can be strict. For example, consider the example in Figure \ref{fig:counter_example}: if we take $q$ to be the central vertex on the right image in this example, then it has only one pre-image but its degree is $\deg(\Phi_q,\partial\M)=2$ as this is precisely the winding number of the image boundary polygon w.r.t. the central vertex $q$.

\begin{proof}
Let us start with the second part of the theorem.
Denote the set $Y=\cup_{\tau\in\K_{d-1}}\Phi\parr{\tau}$. Let $q\in\Real^d\setminus Y$. Lemma \ref{lem:degree_d_face} and Proposition \ref{prop:degree_properties}- Property (\ref{item:additivity_of_deg}) imply:
\begin{eqnarray*}\label{eq:preimage_eq1}
\#\set{\Phi^{-1}(q)} &=& \sum_{\sigma\in\K_d \, \mathrm{s.t} \, q\in\Phi(\sigma) }1 \\&\overset{\text{Lemma} \ref{lem:degree_d_face}}=& \sum_{\sigma \in \K_d }\deg(\Phi_q,\partial \sigma)\\ &\overset{\text{Prop. \ref{prop:degree_properties}, Property (\ref{item:additivity_of_deg})}}=&\deg(\Phi_q,\partial\M).
\end{eqnarray*}
This proves the second claim of the theorem. Now we use Lemma \ref{lem:open_map} to prove the first claim. Assume the claim is not true, that is there exists $q\in \Real^d\setminus\Phi(\partial\M)$ such that $\#\set{\Phi^{-1}(q)} > \deg(\Phi_q, \partial\M)$. By Lemma \ref{lem:open_map} the map $\Phi$ is open so for every $x\in\set{\Phi^{-1}(q)}$ we can take an open neighborhood $U_x$ (where all such neighborhoods $U_x$ are pairwise disjoint), and $V=\cap_{x\in \set{\Phi^{-1}(q)}}\Phi(U_x)$ (note that there is a finite number of pre-images so this is a finite intersection) is an open neighborhood of $q$. Every point $q'\in V$ has strictly more pre-images than $\deg(\Phi_q, \partial\M)$. Since $q\cap \Phi(\partial\M)=\emptyset$ we can take $V$ sufficiently small so that $V\cap \Phi(\partial\M)=\emptyset$. Since $\deg(\Phi_{q'}, \partial\M)$ is an integer and continuous when $q'\in V$ it has to be constant in $V$. Since $Y$ is of measure zero we can find a point $q'\in V \setminus Y$. This point has more pre-images than the integer $\deg(\Phi_q,\partial\M)=\deg(\Phi_{q'},\partial\M)$ in contradiction to the claim already proven above. This concludes the proof.\end{proof}


\section{Sufficient conditions for Injectivity}

Let us quickly recall our setting. We are interested to map a $d$-dimensional mesh embedded in $n\geq d$ dimensional Euclidean space, $\M\subset \Real^n$, injectively and  onto a target domain in the form of a $d$-dimensional polytope $\OOmega\subset \Real^d$. In this section we prove three sets of sufficient conditions. Of-course, all the sets of sufficient conditions ask the candidate simplicial map $\Phi:\M\too\Real^d$ to satisfy the necessary condition for injectivity, namely that it is non-degenerate and has consistent orientation (see Proposition \ref{prop:necessary}). As stated above we will restrict our attention to orientation preserving simplicial maps while keeping in mind that the orientation reversing case is similar.

The first set of sufficient conditions generalizes the global inversion theorem from classical analysis to the piecewise-linear mesh case and simply states that in addition to the necessary non-degeneracy and orientation preserving condition, the boundary map should be bijective:

\textbf{Theorem \ref{thm:bijectivity1}.} \textit{A non-degenerate orientation preserving simplicial map $\Phi:\M\too\Real^d$ of a $d$-dimensional compact mesh with boundary $\M$ is a bijection $\Phi:\M\too\OOmega$ if the boundary map $\Phi\vert_{\partial\M}:\partial\M\too \partial\OOmega$ is bijective.}
\begin{proof}
For an arbitrary point $q\in\interior{\OOmega}$, by assumption $q\notin\Phi(\partial\M)$, and Theorem \ref{thm:main_inequality} implies that $\#\set{\Phi^{-1}(q)}\leq\deg(\Phi_q,\partial\M)=1$. On the other hand for $q$ in the dense set $\interior{\OOmega}\setminus\cup_{\tau\in\K_{d-1}}\Phi(\tau)$
we have $\#\set{\Phi^{-1}(q)}=1$. By continuity of $\Phi$ and compactness of $\M$ this implies that $\#\set{\Phi^{-1}(q)}\geq 1$ for all $q\in\interior{\OOmega}$. Therefore, $\#\set{\Phi^{-1}(q)}= 1$ for all $q\in\interior{\OOmega}$.
For arbitrary $q\in\Real^d\setminus\OOmega$, Theorem \ref{thm:main_inequality} implies that $\#\set{\Phi^{-1}(q)}\leq\deg(\Phi_q,\partial\M)=0$.
This concludes the proof.
\end{proof}

We now move to prove the second set of sufficient conditions, namely Theorem \ref{thm:bijectivity2}. In addition to the setting introduced in the beginning of this section we are also given a topologically feasible assignment $\A:\partial\K_{d-1}\too\partial\OOmega_{d-1}$ of boundary faces of $\M$ to boundary faces of the target polytope domain $\OOmega$. We consider any non-degenerate orientation preserving $\Phi:\M\too\Real^d$ that satisfies the linear equation (\ref{e:linear_con}) for all $\sigma\in\partial\K_{d-1}$ and we will prove that such a $\Phi$ is injective over $\interior{\M}$ and covers $\OOmega$, that is $\Phi(\M)=\OOmega$.
The main power in this formulation of sufficient conditions is that it allows considering a \emph{collection} of different boundary mappings at the minor price of adding linear constraints to the problem.
The idea of the proof is to reduce to the case of Theorem \ref{thm:bijectivity1} using a homotopy argument. We start with calculating the degree $\deg(\Phi_q,\partial\M)$ for \emph{almost all} $q\in\Real^d$.
\begin{lemma}\label{lem:deg_with_sliding_bndry}
Let $\M\subset \Real^n$ be a $d$-dimensional mesh with boundary embedded in $n\geq d$ dimensional Euclidean space, $\OOmega\subset \Real^d$ a $d$-dimensional polytope, and $\A:\partial\K_{d-1}\too\partial\OOmega_{d-1}$ a topologically feasible assignment. Then, for any non-degenerate orientation preserving $\Phi:\M\too\Real^d$ that satisfies the linear equation (\ref{e:linear_con}) and for all $\sigma\in\partial\K_{d-1}$ there exists $$\deg(\Phi_q, \partial\M) =  \begin{cases}1 &  q\in \OOmega\setminus Z\\ 0 & q\in\Real^d\setminus\parr{\OOmega\cup Z} \end{cases},$$ where $Z=\bigcup_{\omega\in \partial\OOmega_{d-1}}\aff(\omega)$ is the union of all affine hyperplanes supporting the boundary $d-1$-faces of $\OOmega$.
\end{lemma}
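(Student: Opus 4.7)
The plan is to fix a point $q\in\Real^d\setminus Z$ and homotope the boundary map $\Phi\vert_{\partial\M}$ through maps whose image remains inside $Z$ (and hence avoids $q$) to the topologically feasible homeomorphism $\psi:\partial\M\too\partial\OOmega$ supplied by the assumption, then use homotopy invariance of the degree to reduce the computation to the winding number of $\partial\OOmega$ about $q$.

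Concretely, I will use the straight-line homotopy $H_t(x)=(1-t)\Phi(x)+t\psi(x)$ for $x\in\partial\M$, $t\in[0,1]$. The crucial geometric observation is that $H_t(\partial\M)\subset Z$ throughout the homotopy. Indeed, for every boundary face $\sigma\in\partial\K_{d-1}$, the linear constraint (\ref{e:linear_con}) gives $\Phi(\sigma)\subset\aff(\A(\sigma))$, while the topological feasibility of $\A$ guarantees $\psi(\sigma)\subset\A(\sigma)\subset\aff(\A(\sigma))$. Since $\aff(\A(\sigma))$ is an affine subspace, every convex combination lies inside it, so $H_t(\sigma)\subset\aff(\A(\sigma))\subset Z$. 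As $q\notin Z$, the composition $\chi_q\circ H_t:\partial\M\too\sphere_q$ is a well-defined continuous homotopy of maps into $\sphere_q$.

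Property (\ref{item:brouwer}) of Proposition \ref{prop:degree_properties} identifies the cycle degree with the classical Brouwer degree on the PL cycle $\partial\M$, and the latter is homotopy-invariant for continuous maps into $\sphere_q$. Therefore $\deg(\Phi_q,\partial\M)=\deg(\chi_q\circ\psi,\partial\M\too\sphere_q)$. Since $\psi$ is an orientation-preserving homeomorphism onto $\partial\OOmega$, the right-hand side equals the degree of $\chi_q$ restricted to $\partial\OOmega$, i.e.\ the winding number of $\partial\OOmega$ about $q$. This standard quantity equals $1$ when $q\in\interior{\OOmega}$ and $0$ when $q\in\Real^d\setminus\OOmega$. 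Because $\partial\OOmega\subset Z$, the assumption $q\notin Z$ already forces $q$ to lie either in $\interior{\OOmega}$ or strictly outside $\OOmega$, which yields exactly the two cases in the lemma.

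The main technical content is the containment $H_t(\partial\M)\subset Z$; everything else is bookkeeping. That containment is precisely the reason the weaker linear constraint (\ref{e:linear_con}) is strong enough to replace a fixed bijective boundary map: both $\Phi(\sigma)$ and any feasible boundary homeomorphism lie in the \emph{same} affine flat on each $(d-1)$-face, so the straight-line homotopy never escapes $Z$. A minor point is that the intermediate maps $H_t$ are not simplicial (because $\psi$ is only a homeomorphism), but this is harmless since Brouwer-degree homotopy invariance handles arbitrary continuous deformations and agrees with our cycle degree at the simplicial endpoint by Property (\ref{item:brouwer}).
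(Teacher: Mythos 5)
Your proof is correct and follows essentially the same route as the paper: the identical straight-line homotopy $(1-t)\Phi(x)+t\psi(x)$, the same containment argument that the homotopy stays inside $\aff(\A(\sigma))\subset Z$ so that composition with $\chi_q$ is well defined for $q\notin Z$, and the same reduction via homotopy invariance of the Brouwer degree to the winding number of the orientation-preserving homeomorphism $\psi$ onto $\partial\OOmega$. No substantive differences to report.
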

A visualization of the set $Z$ can be seen in Figure \ref{fig:slide} as the union of all red dashed lines.
\begin{proof}
We will prove the lemma by employing a degree theory argument. By Proposition \ref{prop:degree_properties} our definition of the number $\deg(\Phi_q, \partial\M)$ coincides with the well-known Brouwer's degree of the continuous map $\Phi_q\vert_{\partial \M}:\partial\M\too\sphere_q$. Since we assume $\A$ is topologically feasible there exists an orientation preserving homeomorphism (not necessarily a simplicial map) $\Psi:\partial\M\too\partial\OOmega$ such that $\Psi(\sigma)\subset \A(\sigma)$ for all
$\sigma\in\partial\K_{d-1}$. Let us consider the family of mappings $\varphi^t_q(\cdot)$ defined by
$$\varphi^t_{q}(x)=\chi_q\Big(\parr{1-t}\Phi(x)+t\Psi(x)\Big), \ \ t\in[0,1], \ x\in \partial\M.$$
Let $Z=\bigcup_{\omega\in \partial\OOmega_{d-1}}\aff(\omega)$. We claim that, for all $q\in\Real^d\setminus Z$, $\varphi^t_q:\partial\M\too\sphere_q$ is a homotopy.  Indeed, fix any $q\in\Real^d\setminus Z$ and consider an arbitrary $x\in\partial\M$. Since $x\in\sigma$ for some $\sigma\in\partial\K_{d-1}$ it means that $\Phi(x)\in\Phi(\sigma)\subset\aff(\A(\sigma))$ and $\Psi(x)\in\Psi(\sigma)\subset \A(\sigma)\subset\aff(\A(\sigma))$. This implies that $$\parr{1-t}\Phi(x)+t\Psi(x)\in \aff(\A(\sigma))\subset Z.$$ Since $Z$ is closed, $q$ has some positive distance to $Z$ and we conclude that $\varphi^t_q$ is continuous in $t$ and $x$, and hence a homotopy. The invariance of the degree to homotopy now implies that for all $q\in\Real^d\setminus Z$, $$\deg(\Phi_q,\partial\M)=\deg(\chi_q\circ\Psi,\partial\M),$$ and since $\Psi:\partial\M\too\partial\OOmega$ is an orientation preserving homeomorphism, for any interior point $q\in\OOmega\setminus Z$,  $\deg(\chi_q\circ\Psi,\partial\M)=1$, and for any $\Real^d\setminus\parr{\OOmega\cup Z}$ it equals $0$, as required.
\end{proof}

We now prove the second set of sufficient conditions.

\textbf{Theorem \ref{thm:bijectivity2}.} \textit{Let $\M\subset \Real^n$ be a $d$-dimensional compact mesh with boundary embedded in $n\geq d$ dimensional Euclidean space, $\OOmega\subset \Real^d$ a $d$-dimensional polytope, and $\A:\partial\K_{d-1}\too\partial\OOmega_{d-1}$ a topologically feasible assignment between their boundaries. Then, any non-degenerate orientation preserving $\Phi:\M\too\Real^d$ that satisfies the linear equation (\ref{e:linear_con}) for all $\sigma\in\partial\K_{d-1}$ satisfies the following:
\begin{enumerate}
\item
$\Phi$ is injective over the interior of $\M$.
\item
$\Phi(\M)=\OOmega$.
\end{enumerate}}
\begin{proof}
We start with proving (2).\\
Let $Y=\cup_{\tau\in\K_{d-1}}\Phi\parr{\tau}$. Theorem \ref{thm:main_inequality} implies that $\#\set{\Phi^{-1}(q)}=\deg(\Phi_q,\partial\M)$ for $q\in\Real^d\setminus Y$. Let $Z=\bigcup_{\omega\in \partial\OOmega_{d-1}}\aff(\omega)$. Lemma \ref{lem:deg_with_sliding_bndry} now implies that
\begin{equation}\label{thm_eq:preimages_count}
\#\set{\Phi^{-1}(q)} =  \begin{cases}1 &  q\in \OOmega\setminus \parr{Z\cup Y}\\ 0 & q\in\Real^d\setminus\parr{\OOmega\cup Z\cup Y} \end{cases}.
\end{equation}
Denote $Q=Z\cup Y$. Note that $Q$ is a set of measure zero. Take an arbitrary point $q\in \Real^d\setminus \OOmega$. If $q$ is outside $Q$ we already showed it has no pre-images in $\M$. Assume that $q\in Q$, and assume it has a pre-image $x\in\M$. $x\in\sigma$ for some $\sigma\in\K_d$, and since $\Phi$ is non-degenerate, the set $\Phi(\sigma)$ is $d$-dimensional face and therefore we can find $w\in \Phi(\sigma)\setminus\parr{\OOmega\cup Q}$ with a pre-image in $\sigma$, in contradiction with Eq.~(\ref{thm_eq:preimages_count}). Therefore $\#\set{\Phi^{-1}(q)}=0$ for any $q\in\Real^d\setminus\OOmega$. This implies that $\Phi(\M)\subset\OOmega$. Let us show that $\OOmega\subset\Phi(\M)$. First from Eq.~(\ref{thm_eq:preimages_count}) we have $\OOmega\setminus Q \subset \Phi(\M)$. For any point $q\in \OOmega \cap Q$ we can take a series $q_n\too q$ such that $\#\set{\Phi^{-1}(q_n)}=1$ (using Eq.~(\ref{thm_eq:preimages_count})). Let $\set{x_n}\subset\M$ be their pre-images, that is $\Phi(x_n)=q_n$. Since $\M$ is compact we can extract a convergent subsequence $x_n\too x\in\M$ (we abused notation and kept the original indexing). Now from continuity of $\Phi$ we have $\Phi(x)=\Phi(\lim(x_n))=\lim\Phi(x_n)=\lim q_n=q$, and we have found a pre-image of $q$ and therefore $q\in\Phi(\M)$. Since this is true for all $q\in\OOmega\cap Q$, we have $\OOmega\subset\Phi(\M)$ and therefore $\Phi(\M)=\OOmega$.

Let us prove the injectivity property (1) next. Assume in negation that there exist two points $x,x'\in\interior{\M}$ such that $\Phi(x)=\Phi(x')$. Denote $q=\Phi(x)$. By the open map property of $\Phi$ (see Lemma \ref{lem:open_map}) there exists a neighborhood $U$ of $q$ such that all $q'\in U$ have at-least two pre-images in $\M$. This contradicts Eq.~(\ref{thm_eq:preimages_count}) and concludes the proof. \end{proof}

As a side remark note that the open map property and Eq.~(\ref{thm_eq:preimages_count}) above also imply that $\Phi(\interior{\M})\subset \interior{\OOmega}$.

As mentioned and demonstrated in Section \ref{s:preliminaries}, Theorem \ref{thm:bijectivity2} guarantees the injectivity of the map $\Phi$ in the interior of the mesh, and that $\Phi$ is onto the domain $\OOmega$ but does not guarantee that the boundary is mapped bijectively. In the next and final set of sufficient conditions we show that adding the condition that $\Phi$ is orientation preserving on the boundary (as defined in Section \ref{s:preliminaries}) leads to injectivity of $\Phi$ over the boundary of $\M$ as-well.
\newpage
\textbf{Theorem \ref{thm:bijectivity3}.} \textit{Let $\M\subset \Real^n$ be a $d$-dimensional compact mesh with boundary embedded in $n\geq d$ dimensional Euclidean space, $\OOmega\subset \Real^d$ a $d$-dimensional polytope, and $\A:\partial\K_{d-1}\too\partial\OOmega_{d-1}$ a topologically feasible assignment between their boundaries. Then, any non-degenerate orientation preserving $\Phi:\M\too\Real^d$ that satisfies the linear equation (\ref{e:linear_con}) for all $\sigma\in\partial\K_{d-1}$ and is orientation preserving on the boundary is a bijection between $\M$ and $\OOmega$.}
\begin{proof}
The proof is by induction on $d$.\\
Let us prove the theorem for $d=1$. That is, $\M$ is a polygonal line defined by a finite series of ordered points $x_0,x_1,x_2,...,x_L \in \Real^n$, its boundary consists of the two end points $\partial\M=\set{x_0,x_L}$, and $\OOmega$ is $1$-dimensional polytope in $\Real$ homeomorphic to $\M$, namely a segment $[a,b]\subset\Real$. $\Phi$ is non-degenerate and orientation preserving which means that $\Phi(x_{i+1})>\Phi(x_i)$ for all $i=0,...,L-1$. Furthermore $\Phi$ satisfies Eq.~(\ref{e:linear_con}) which in this case means that $\Phi(x_0)=a$, and $\Phi(x_L)=b$. Putting this together we have $a=\Phi(x_0)<\Phi(x_1)<...<\Phi(x_{L-1})<\Phi(x_L)=b$ that implies that $\Phi$ is a bijection onto $[a,b]$.

Let us assume we have proved the $d=\ell-1$ case, and prove the $d=\ell$ case. We have an $\ell$-dimensional mesh $\M$, and a polytope $\OOmega$ with boundary $\partial\OOmega$ of dimension $\ell-1$. Consider an arbitrary polytope's boundary $\ell-1$-face $\tau\in\partial\OOmega_{\ell-1}$. Let $\M_\tau\subset\partial\M$ be the $\ell-1$ dimensional submesh that is assigned to be mapped into $\aff(\tau)$. Remember that by our definition of the assignment function $\A$ we have that the set of $\ell-1$-faces of $\M_\tau\subset\partial\M$ that is mapped into $\aff(\tau)$ is $\A^{-1}(\tau)$.

We know by assumption that $\Phi$ restricted to $\M_\tau$, that is $\Phi\vert_{\M_{\tau}}$, does not degenerate and preserves the orientation of all faces in $\A^{-1}(\tau)\subset\partial\K_{\ell-1}$ as it maps them into $\aff(\tau)$. $\A$ restricted to $\partial\M_\tau$ is topologically feasible (since $\A$ is topologically feasible). Furthermore, by definition $\Phi\vert_{\M_{\tau}}$ is  orientation preserving on the boundary $\partial\M_\tau$, and by Eq.~(\ref{e:linear_con2}) it satisfies Eq.~(\ref{e:linear_con}) w.r.t. the submesh $\M_\tau$ and the restricted $\A$. The induction assumption now implies that $\Phi\vert_{\M_\tau}$ is injective and onto $\tau$. As $\tau\in\partial\OOmega_{\ell-1}$ was arbitrary we have that $\Phi\vert_{\partial\M}$ is bijective, and since $\Phi$ is non-degenerate and orientation preserving, Theorem \ref{thm:bijectivity1} implies that $\Phi$ is a bijection.
\end{proof}

\vspace{-0.5cm}
\section{Numerical Experiments}
Theorems \ref{thm:bijectivity1},\ref{thm:bijectivity2},\ref{thm:bijectivity3} can be used to design algorithms that produce injective mappings of meshes onto polytopes. Of special interest is Theorem \ref{thm:bijectivity2} as it adds simple linear conditions (i.e., Eq.~(\ref{e:linear_con})) in addition to the necessary orientation preservation constraints and allows working with weaker boundary conditions than prescribing a bijective boundary map, as required from Theorem \ref{thm:bijectivity1}. Namely, it only requires providing a topologically feasible assignment $\A$. In this section we demonstrate how this can be used for mapping triangular meshes, i.e., $d=2$.

Let $\M=(\K_0\cup\K_1\cup\K_2)$ be a triangular mesh with boundary $\partial\M$, where $\K_0=\set{v_i}$ the set of vertices in $\Real^2$ or $\Real^3$, $\K_1=\set{e_k}$ the set of edges, and $\K_2=\set{f_j}$ the set of triangles. Assume we want to map $\M$ bijectively onto a planar polygonal domain $\Omega\subset \Real^2$, that is compute a bijective simplicial map $\Phi:\M\too\OOmega$. A simplicial map of $\M$ is uniquely  described by setting the image of the vertices $\Phi(v_i)$, $v_i\in\K_0$. Then, the affine map of each triangle $f_j\in\K_2$ is the unique affine map that takes the corners $v_1,v_2,v_3$ of triangle $f_j$ to $\Phi(v_1),\Phi(v_2),\Phi(v_3)$. Fixing a coordinate frame in triangle $f_j$ and $\Real^2$, we denote by $\v_i\in\Real^2$, $i=1,2,3$ the coordinate vector representing vertex $v_i$ in the frame of triangle $f_j$, and $\u_i\in\Real^2$ the unknown vector representing $\Phi(v_i)$ in the global frame in the plane. Then, the affine map of face $f_j$, written in coordinates as $\A_j(\vecx)=A_j\vecx+\delta_j$
can be expressed as the unique solution of the linear system:
$$\brac{A_j\,\, \delta_j}\left [ \begin{array}{ccc}\v_1 & \v_2 & \v_3 \\
1 & 1 & 1 \\
\end{array}\right ]= \left [\begin{array}{ccc}\u_1 & \u_2 & \u_3 \\
\end{array}\right ].$$
That is, $A_j,\delta_j$ can be written as a constant linear combination of the variables $\u_i$.

For assuring the necessary conditions for bijectivity, namely that each triangle's affine map $A_j$ is orientation preservation, one can use the convex constraints introduced, for example, in \cite{Lipman:2012:BDM,Lipman:2013:feature_matching} or \cite{Bommes:2013}. We will use the former mainly since they posses a maximality property. For completeness let us recap and reformulate the conditions here. For maintaining the orientation we ask that the determinant of the Jacobian is positive, namely  $\det(A_j)>0$ for all $f_j\in\K_2$. By direct computation one can check that $$2\det(A_j)=\norm{B_j}^2_F-\norm{C_j}^2_F,$$ where $B_j=\frac{A_j-A_j^T+\tr(A_j)I}{2}$ and $C_j = \frac{A_j+A_j^T-\tr(A_j)I}{2}$. Requiring that $\det(A_j)>0$ is equivalent to requiring that $$\norm{B_j}_F>\norm{C_j}_F.$$ As this is not a convex space we can, similarly to \cite{Lipman:2012:BDM,Lipman:2013:feature_matching} carve maximal convex subsets of this set via $$\frac{\tr(R_j^T B_j)}{\sqrt{2}}>\norm{C_j}_F,$$ where $R_j$ is an arbitrary rotation matrix (we'll explain shortly how to choose it). This is a convex second-order cone constraint which can be optimized using standard SOCP solvers such as MOSEK \cite{MOSEK}. In fact, along with the positive determinant conditions we found it to be numerically stable to bound also the condition number of $A_j$, where the condition number is defined as the ratio of the maximal to minimal singular values of $A_j$. Both can be enforced by a slight modification to the above equation:
\begin{equation}\label{eq:bd}
\mu\frac{\tr(R_j^T B_j)}{\sqrt{2}}>\norm{C_j}_F,
\end{equation}
where $\mu=\frac{K-1}{K+1}$, and $K$ is the desired bound on the condition number of $A_j$. (We used $K=15$ for the planar mappings and $K=5$ for the mesh parameterization example.)  When working with planar meshes, that is $\K_0\subset\Real^2$ we initialized $R_j=I$, and for the surface mesh we picked an arbitrary frame in each 3D triangles so $R_j$ was initialized as an arbitrary $2\times 2$ rotation matrix. We added the boundary conditions as required in Theorems \ref{thm:bijectivity1} or \ref{thm:bijectivity2}, both are sets of linear constraints with the unknowns $\set{\u_j}$ and solve a feasibility problem to get an initial mapping. That is, we added to the left hand side in (\ref{eq:bd}) a new auxiliary variable $t$ (the same $t$ for all faces $f_j$) and minimized $t$. This is a convex problem. When reached a minimum, if $t<0$ we have found a feasible solution and hence a bijective mapping according to Theorem \ref{thm:bijectivity1} or Theorem \ref{thm:bijectivity2}. If the minimal $t$ was greater than zero we reset the rotation $R_j$ to be the rotation closest to $B_j$ and repeated. As explained in \cite{Lipman:2012:BDM} this procedure takes the largest symmetric convex space around the current map and hence allows further reduction of the functional, in this case $t$.

In applications one is often interested in mappings that posses some regularity. In this paper we chose to optimize a standard well-known regularity functional, namely the Dirichlet energy. The Dirichlet energy is defined over triangular meshes as
$E_{\textrm{dir}}(\Phi)=\sum_{f_j\in\K_2}\norm{A_j}_F^2\area{f_j}$, where $\area{f_j}$ is the area of $f_j$. We optimized the Dirichlet energy with the convex constraints (\ref{eq:bd}) using the rotation $R_j$ achieved in the feasibility phase described above. Once converged we reset the rotations and repeat until convergence (usually no more than 3-5 iteration are required). We demonstrate this algorithm in two scenarios: planar mesh mapping and surface mesh parameterization.  The algorithm is implemented in Matlab environment using the MOSEK \cite{MOSEK} and YALMIP \cite{YALMIP} optimization packages.

\begin{figure}[t!] 
\centering
\begin{tabular}{@{\hspace{0.0cm}}c@{\hspace{0.0cm}}c@{\hspace{0.0cm}}c@{\hspace{0.0cm}}c@{\hspace{0.0cm}}c@{\hspace{0.0cm}}}
   \includegraphics[width=0.2\columnwidth]{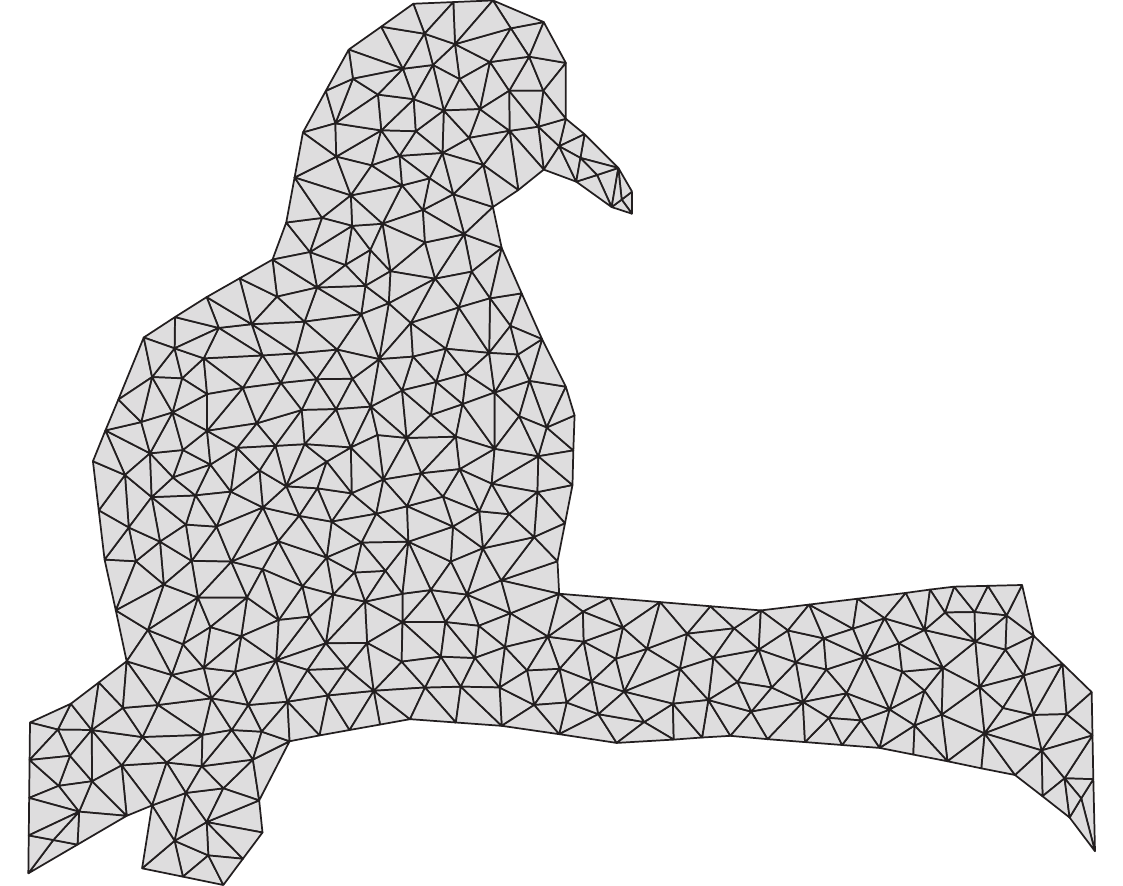}&
  \includegraphics[width=0.2\columnwidth]{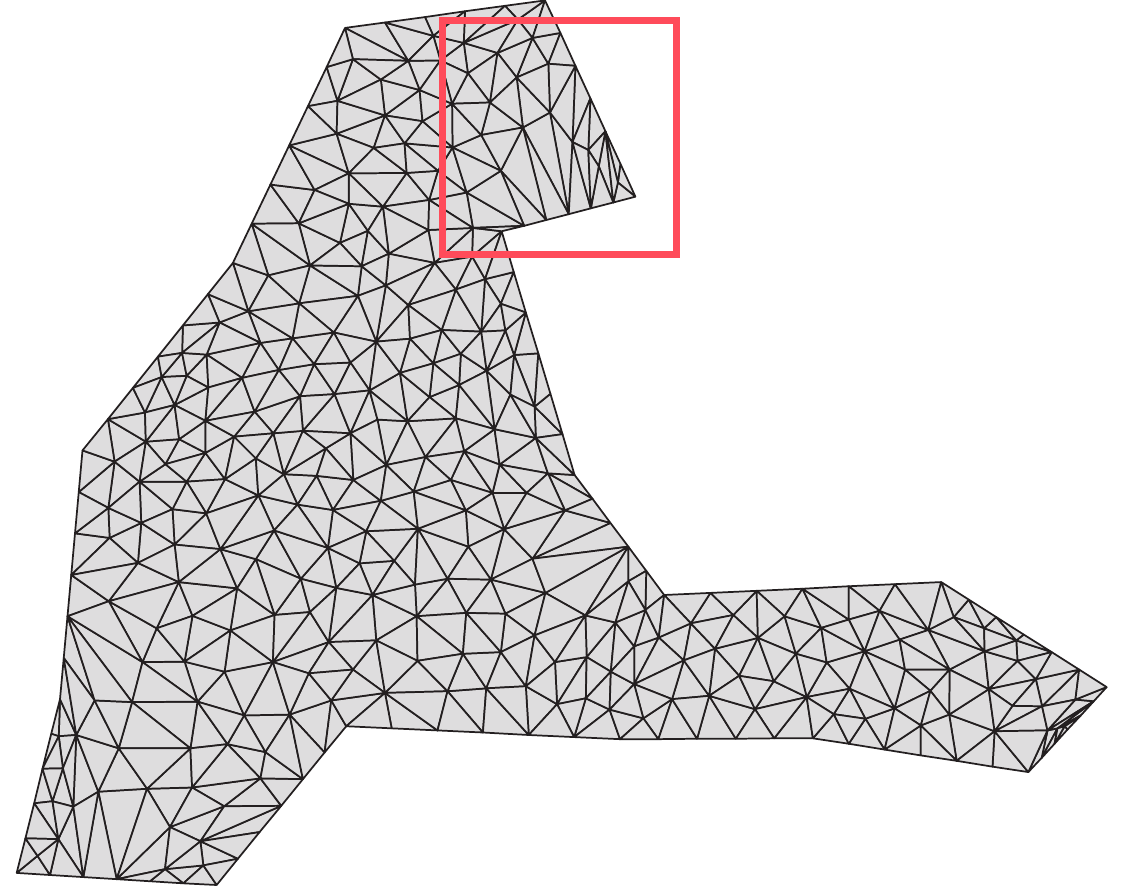}&
   \includegraphics[width=0.2\columnwidth]{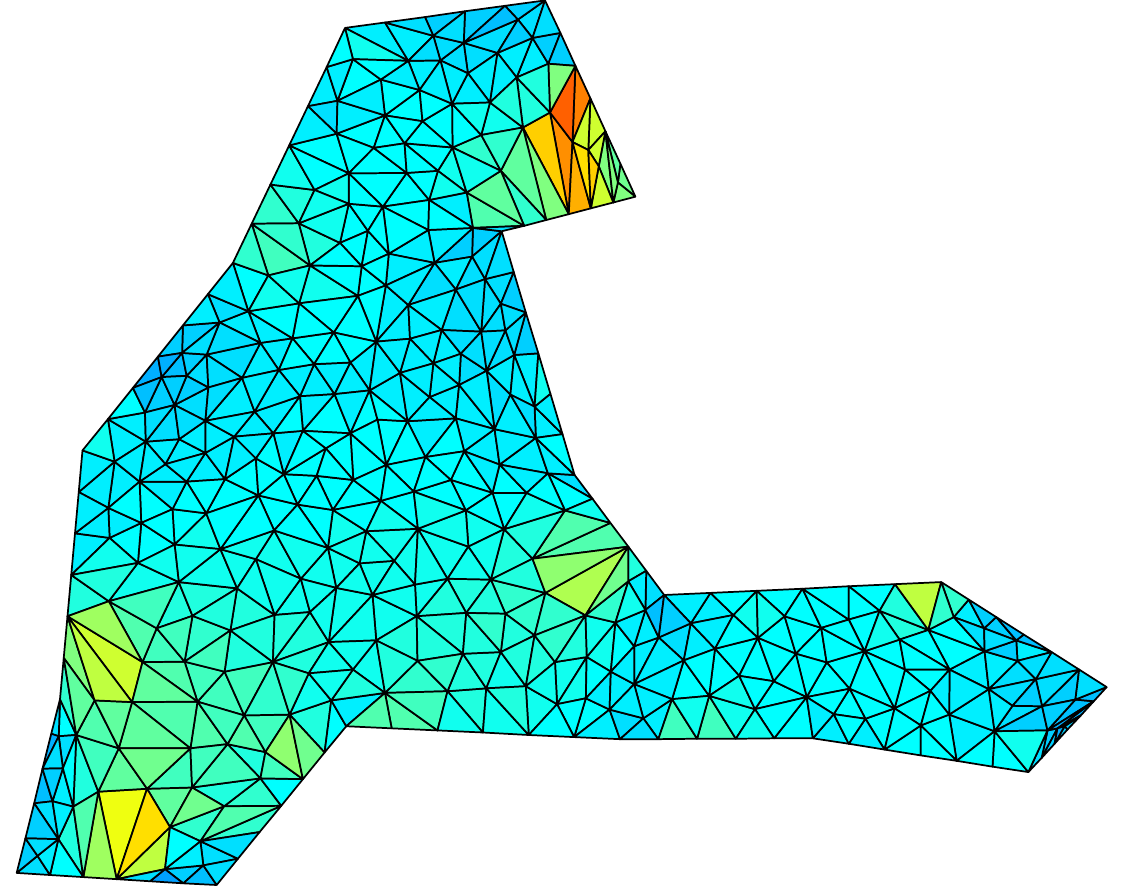}&
    \includegraphics[width=0.2\columnwidth]{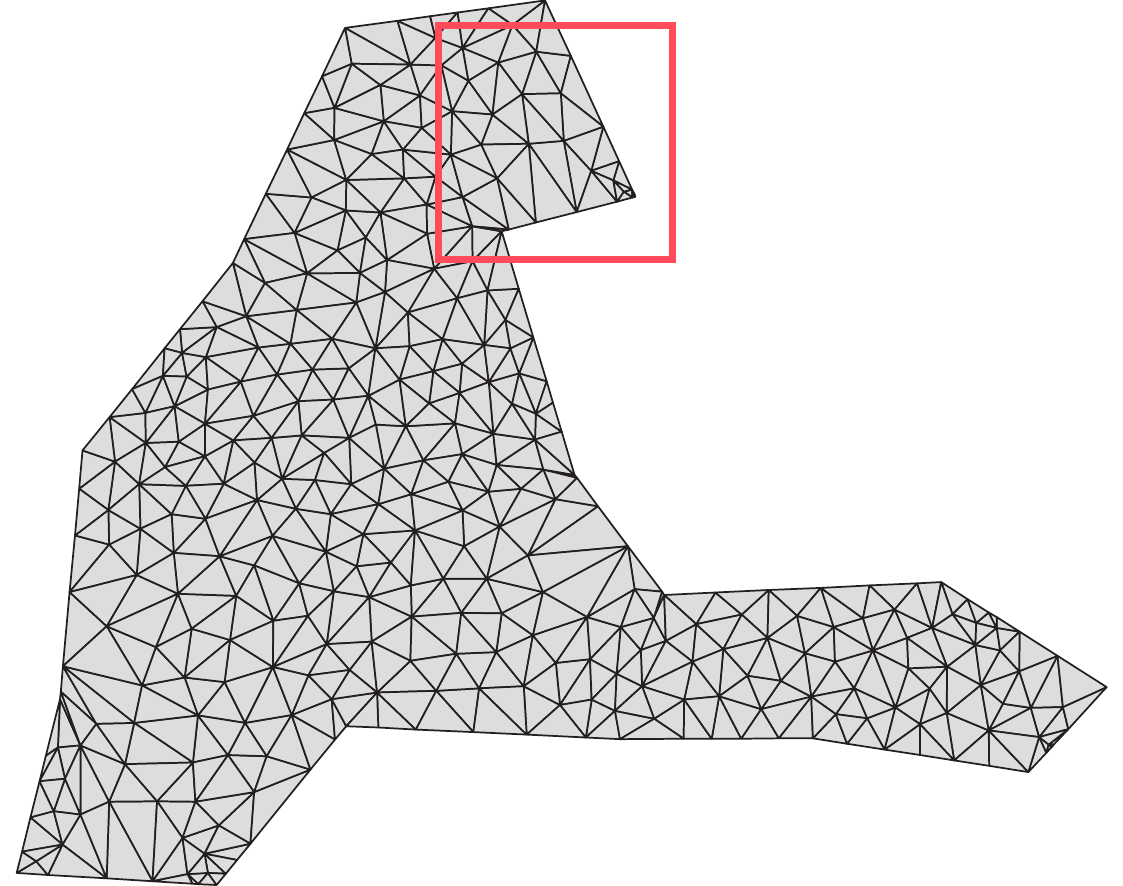}&
    \includegraphics[width=0.2\columnwidth]{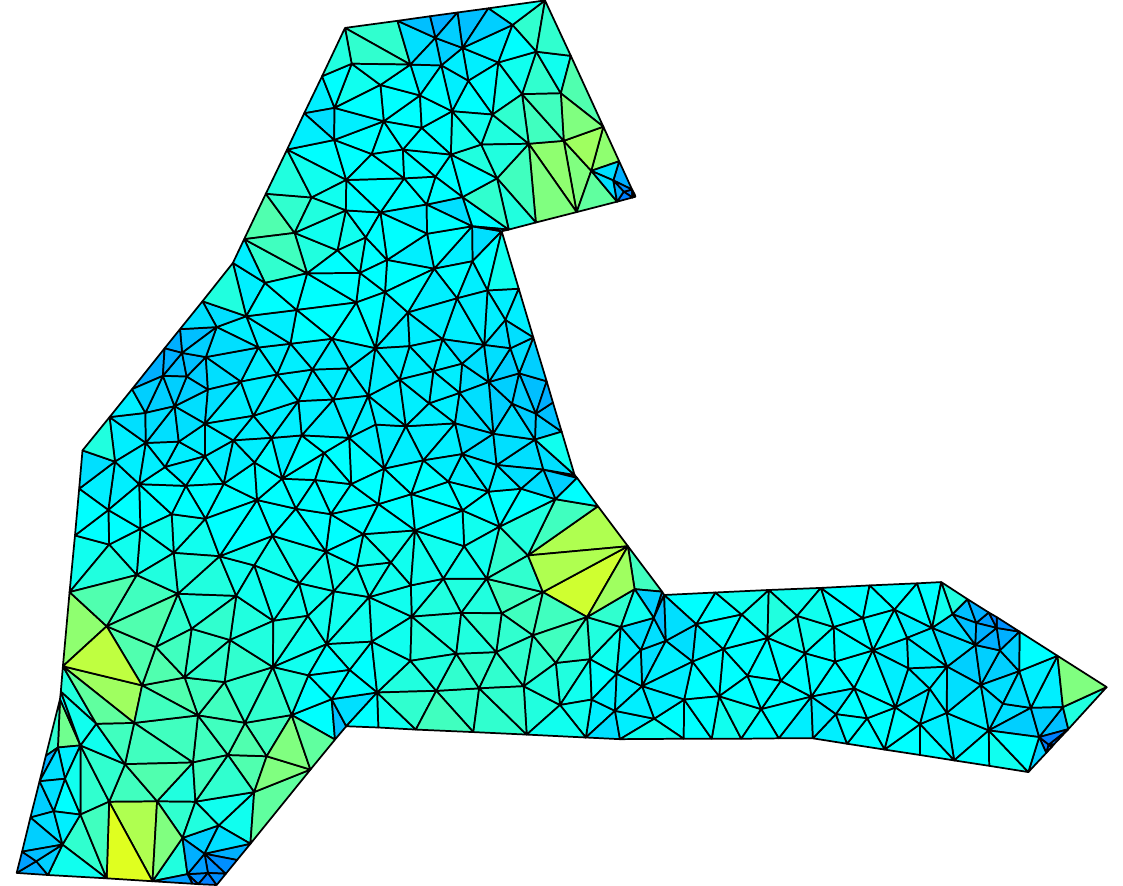}\\
    & $E_{\mathrm{dir}}=0.729$ & & $E_{\mathrm{dir}}=0.692$ &  \\

  \includegraphics[width=0.2\columnwidth]{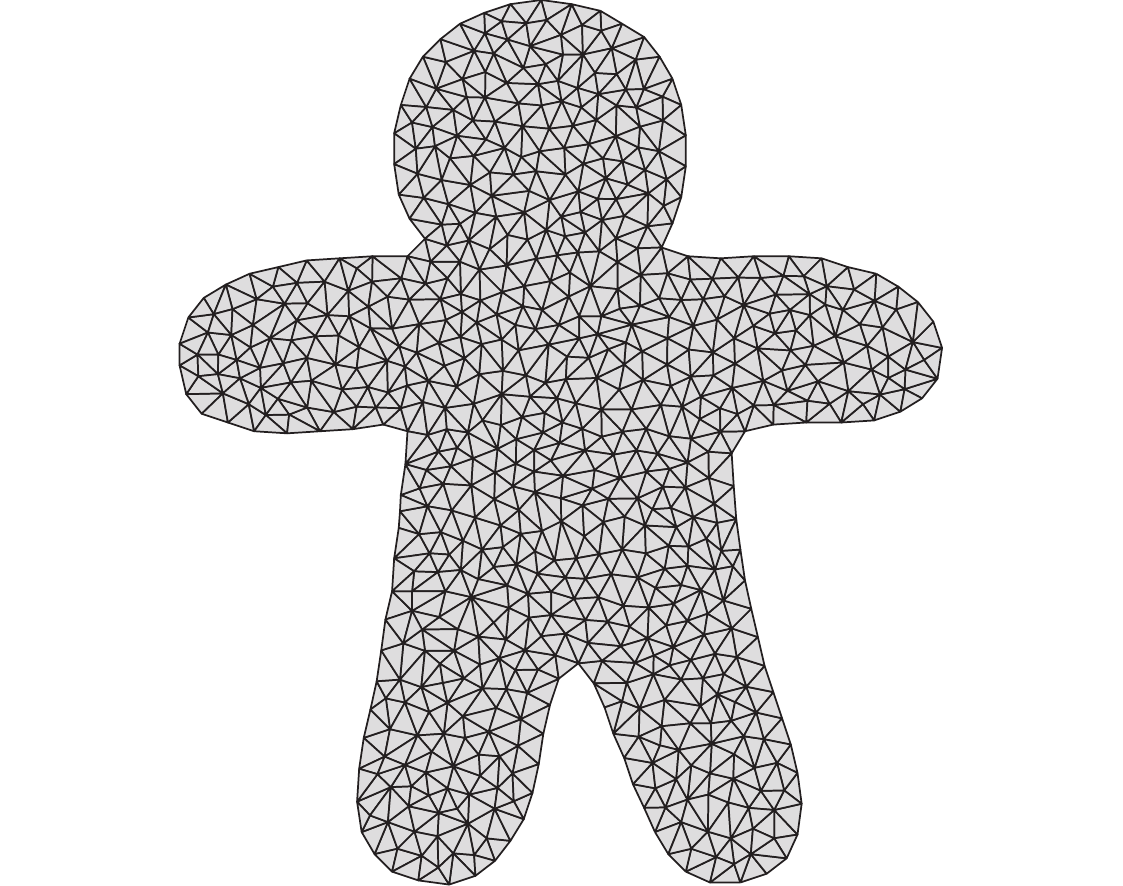}&
  \includegraphics[width=0.2\columnwidth]{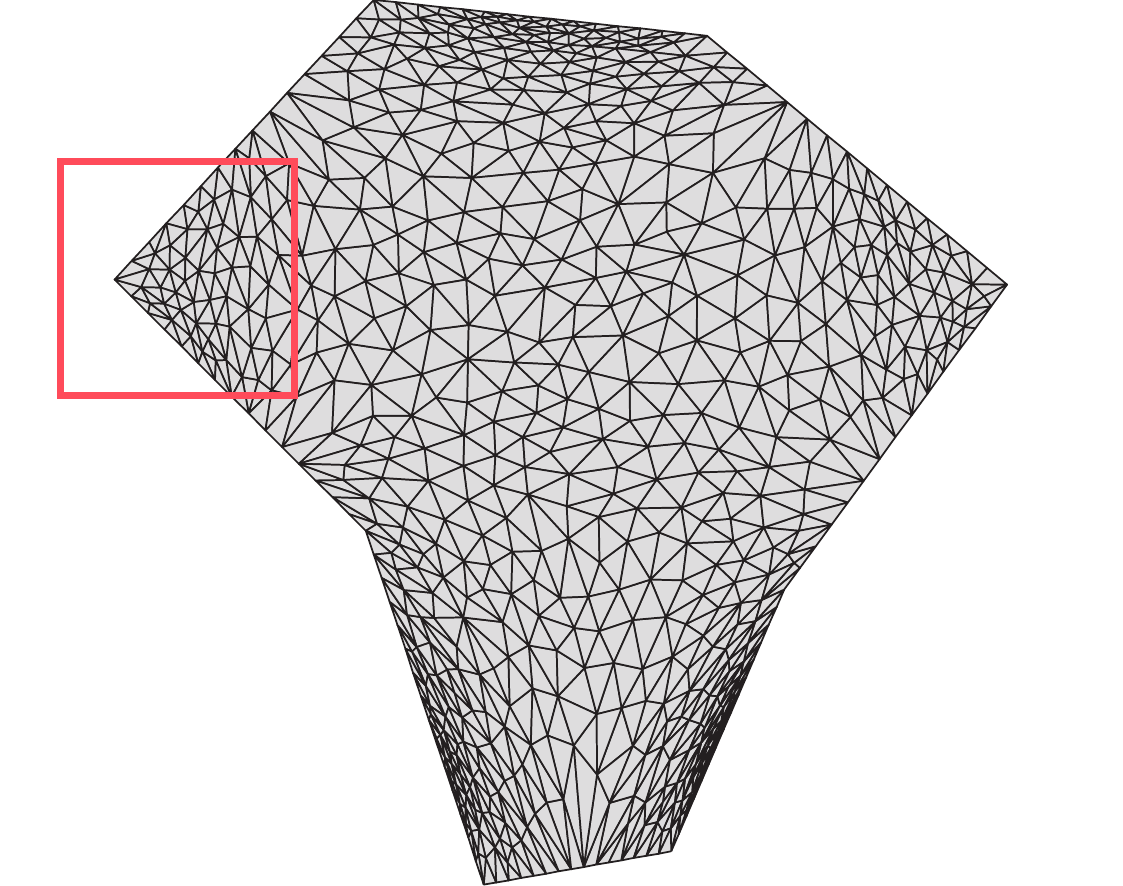}&
   \includegraphics[width=0.2\columnwidth]{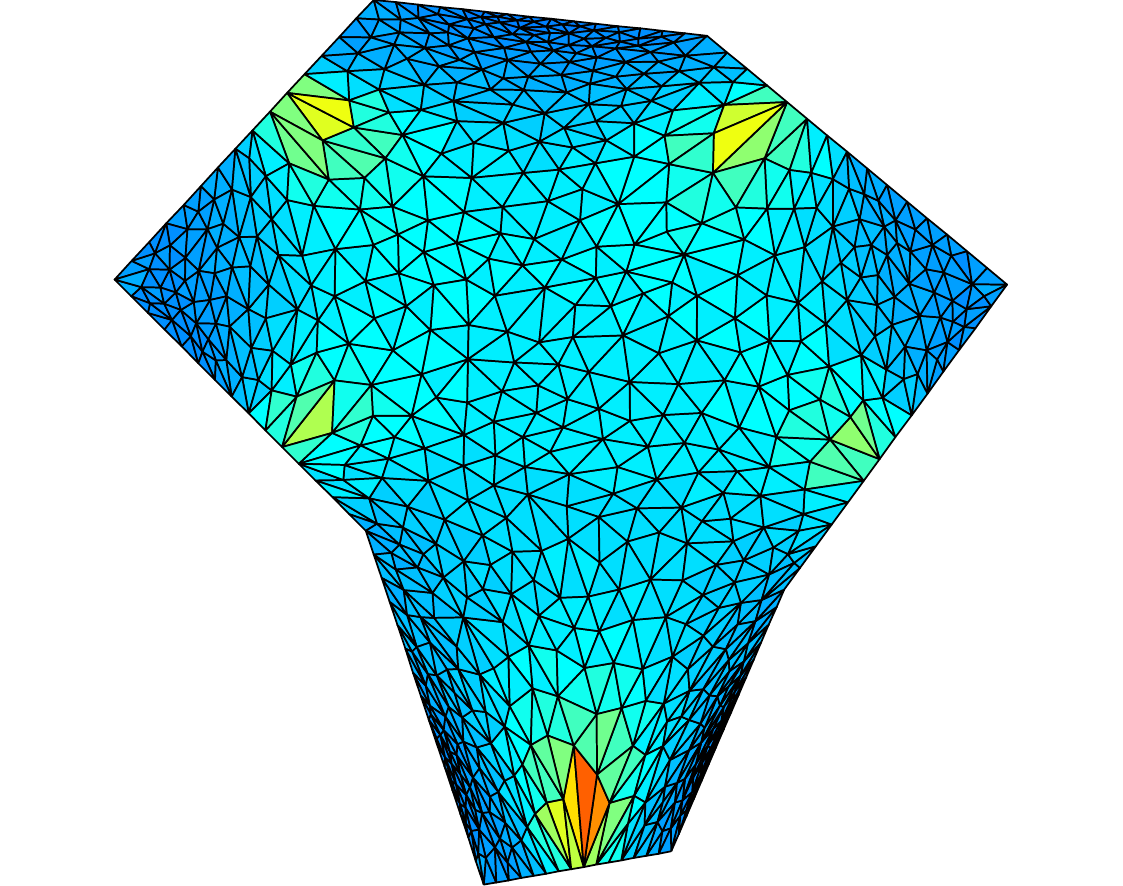}&
    \includegraphics[width=0.2\columnwidth]{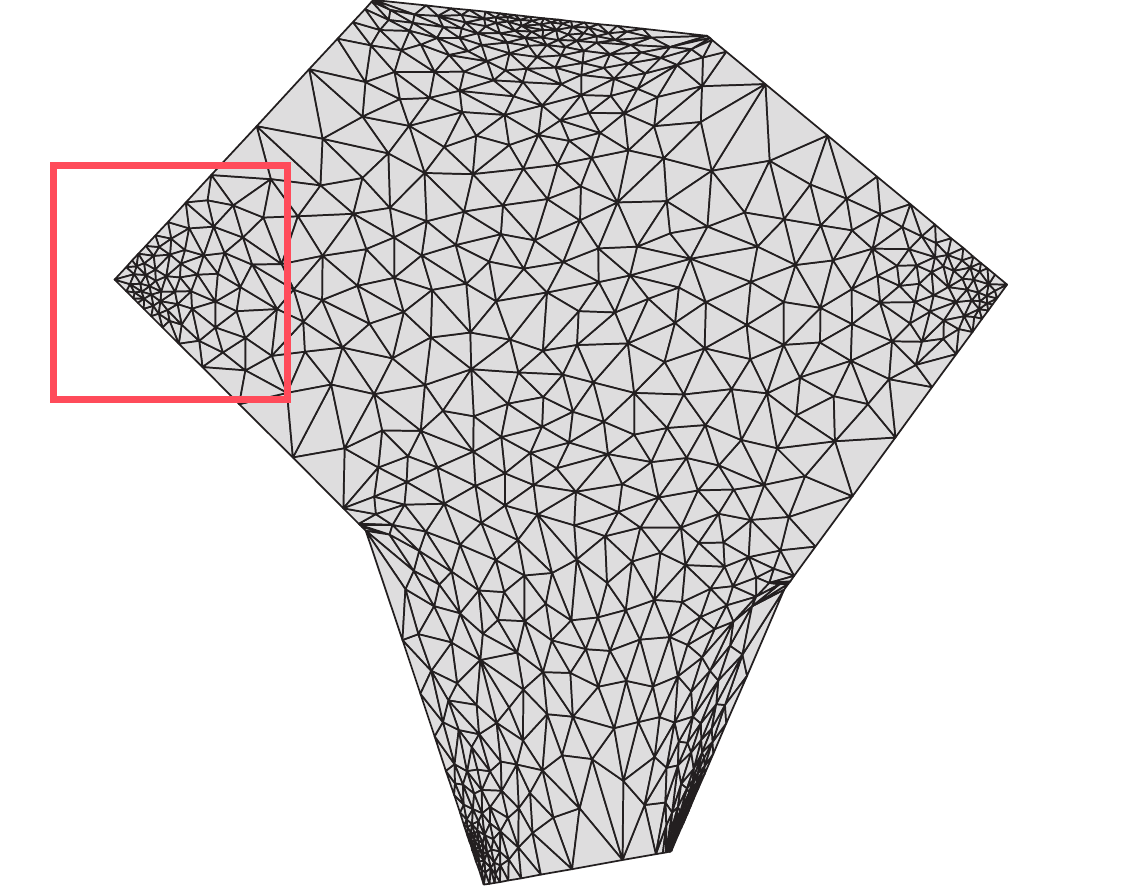}&
    \includegraphics[width=0.2\columnwidth]{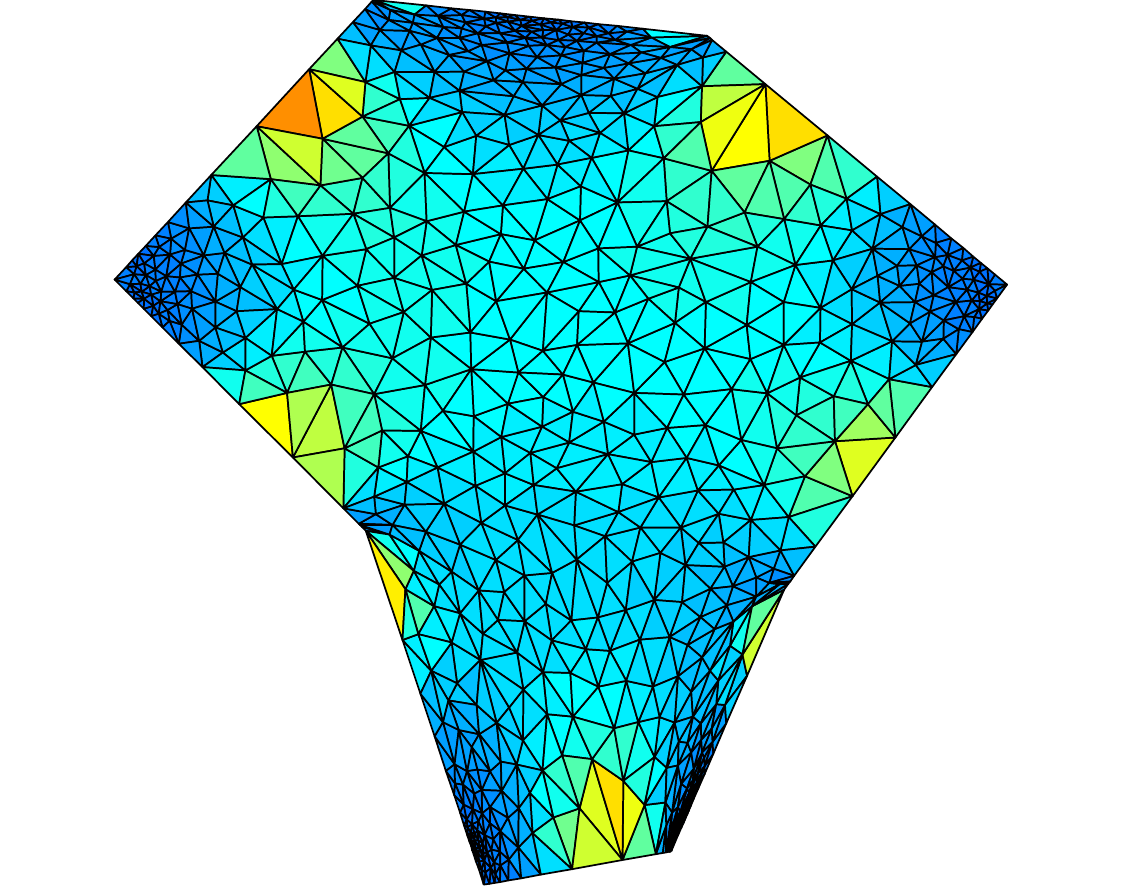}\\
    & $E_{\mathrm{dir}}= 1.065$ & & $E_{\mathrm{dir}}=0.927$  & \\

  \includegraphics[width=0.2\columnwidth]{figures/examples/woody.pdf}&
  \includegraphics[width=0.2\columnwidth]{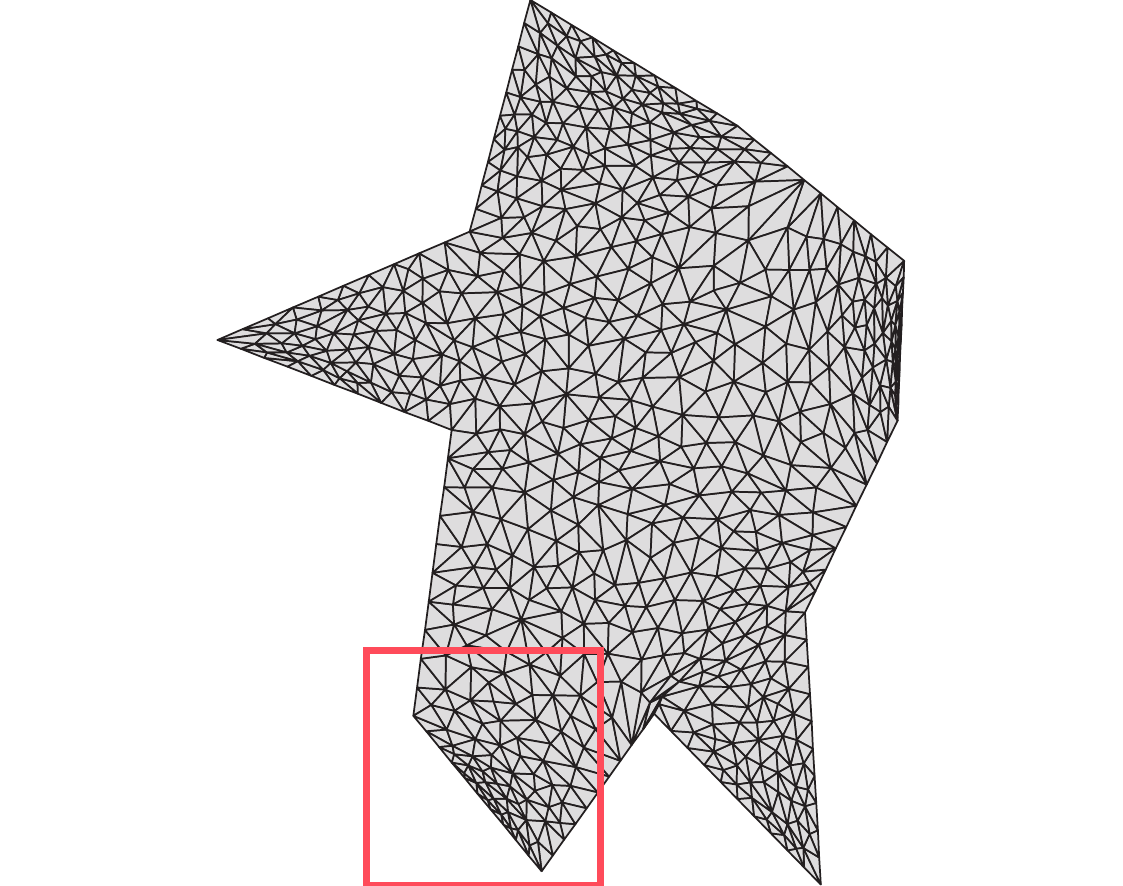}&
   \includegraphics[width=0.2\columnwidth]{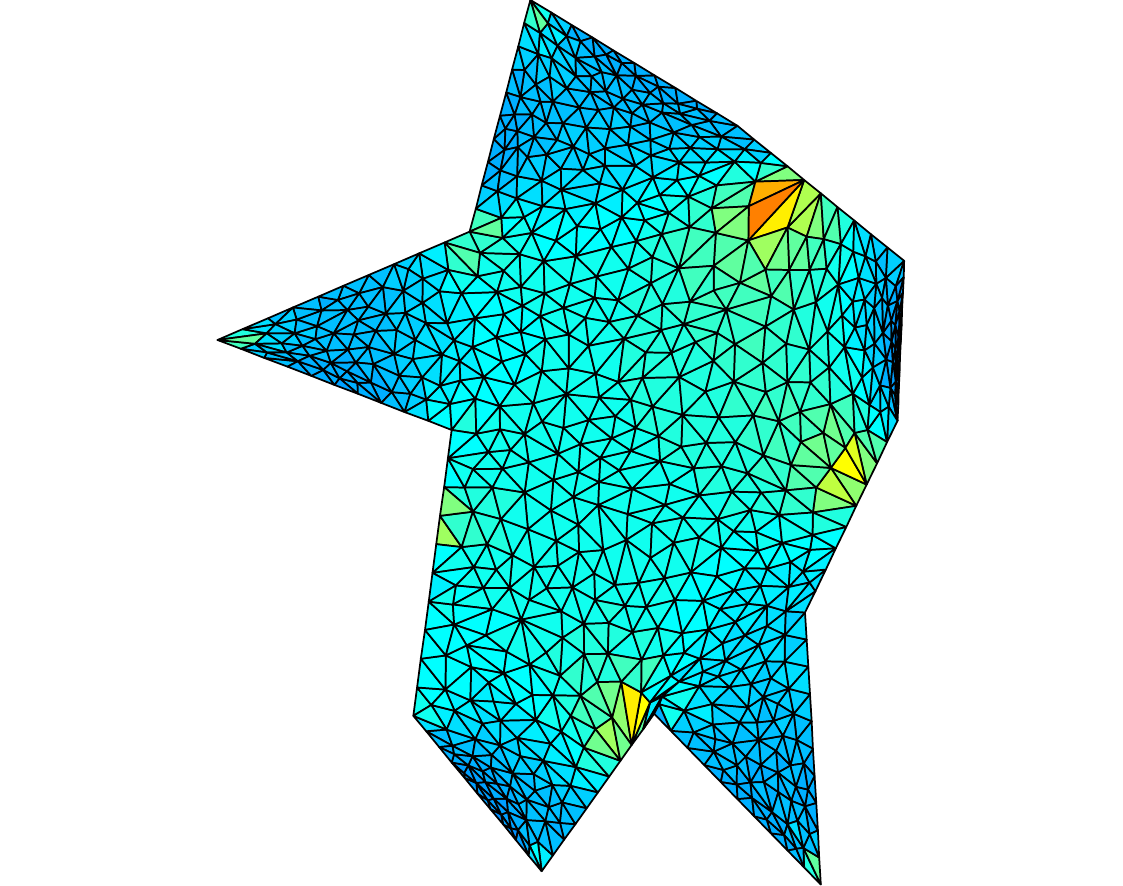}&
    \includegraphics[width=0.2\columnwidth]{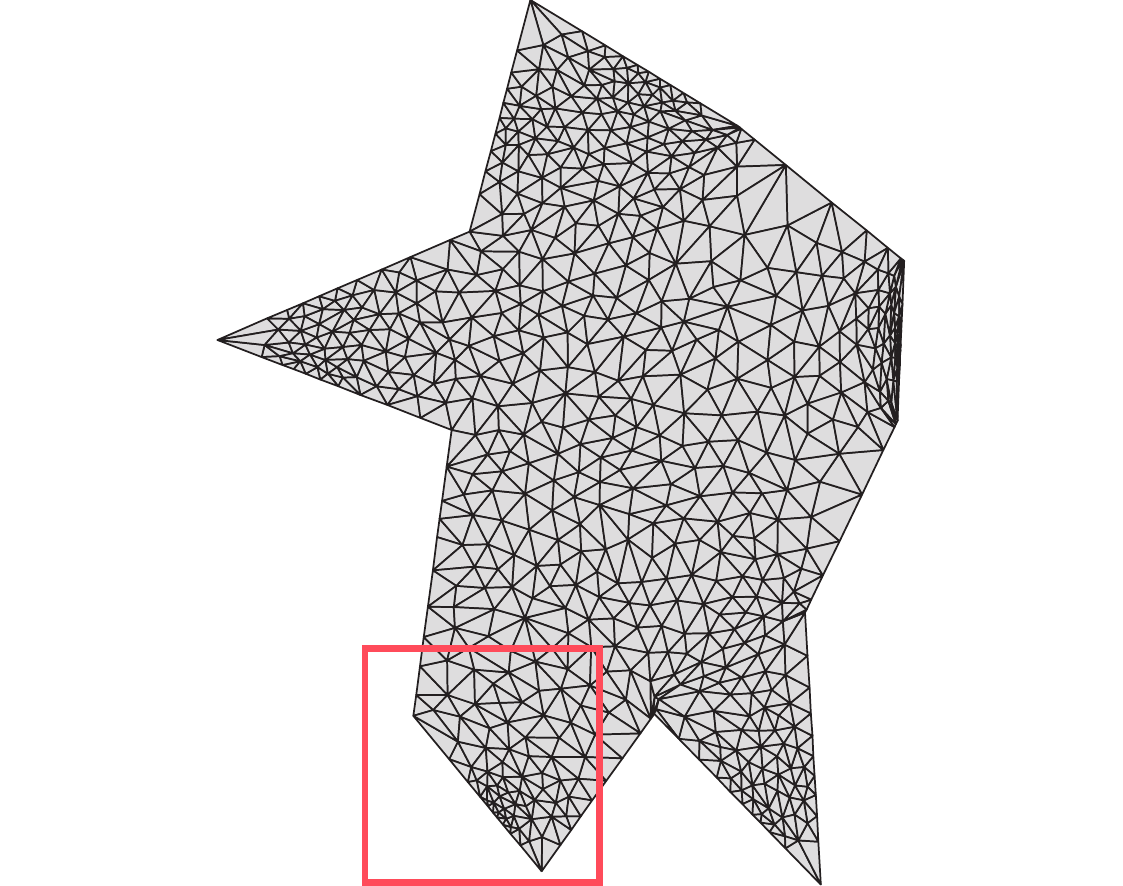}&
    \includegraphics[width=0.2\columnwidth]{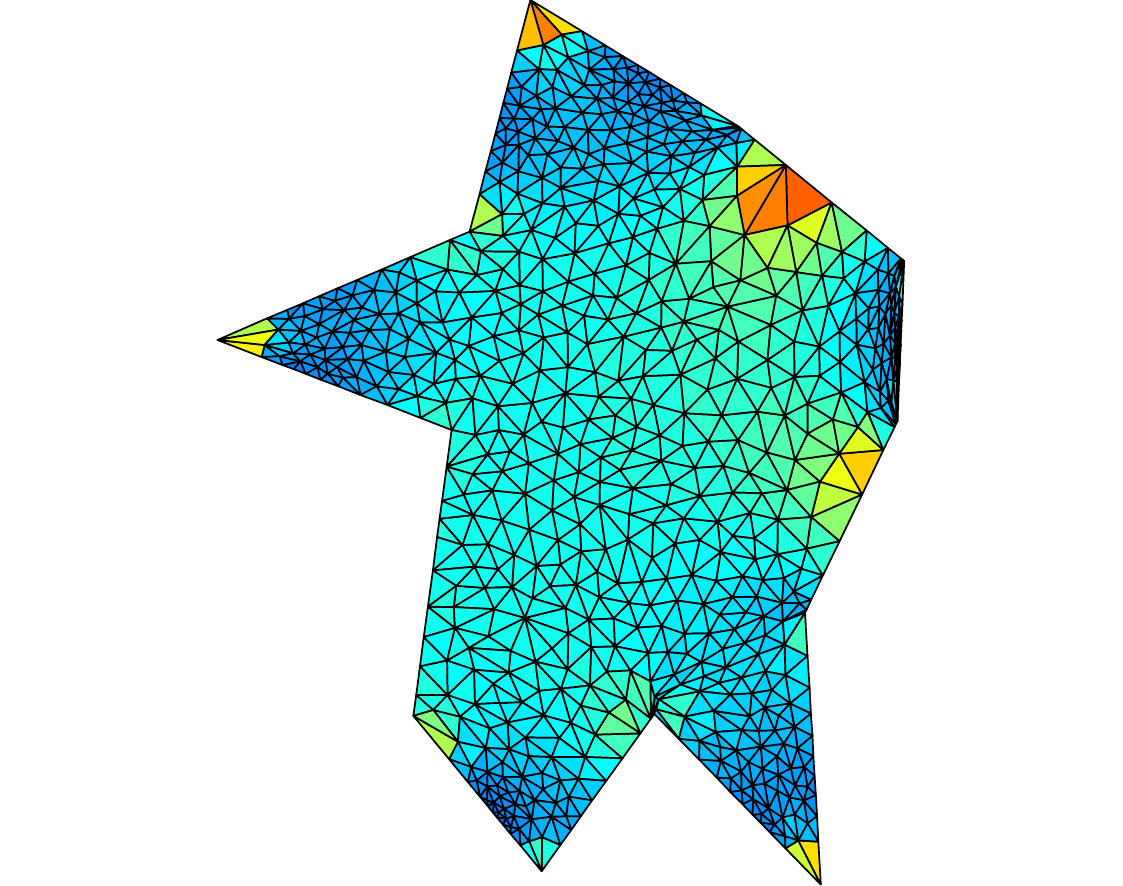}\\
    & $E_{\mathrm{dir}}=0.840$ & & $E_{\mathrm{dir}}=0.797$ & \\

  \includegraphics[width=0.2\columnwidth]{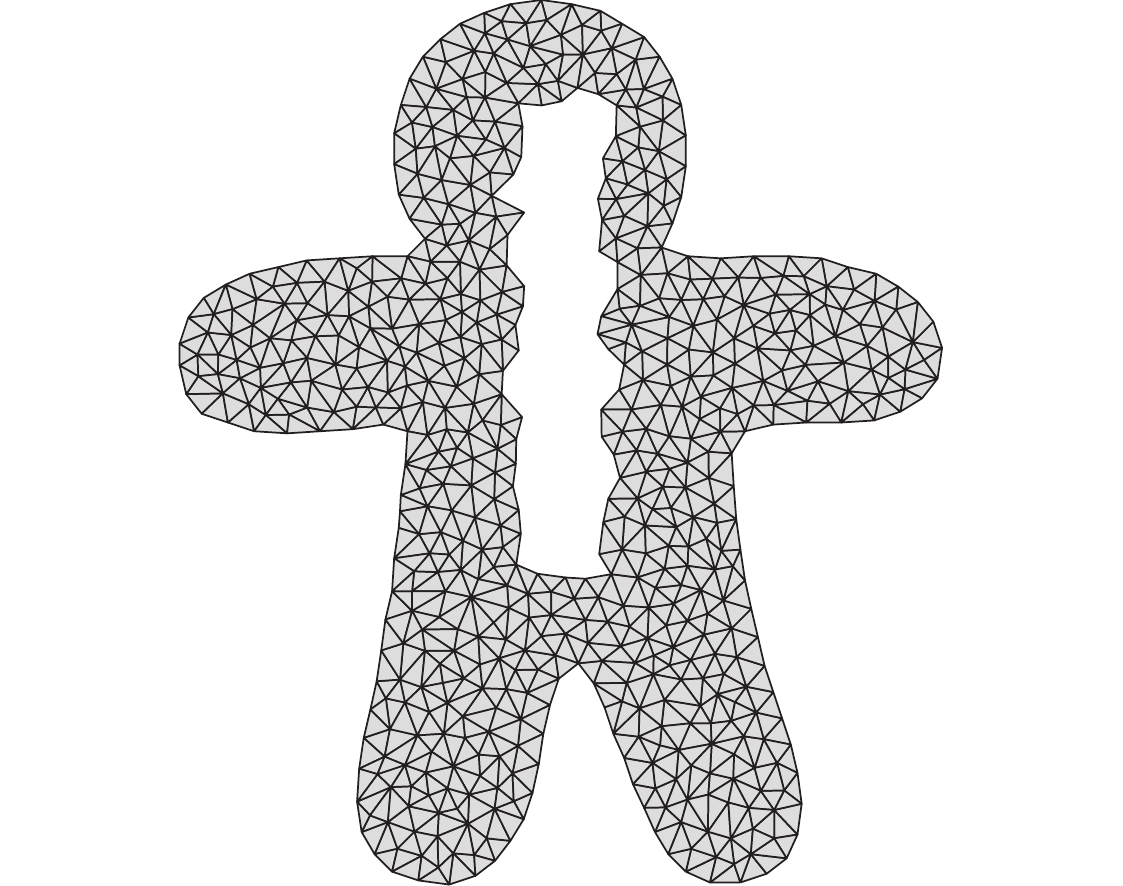}&
    \includegraphics[width=0.2\columnwidth]{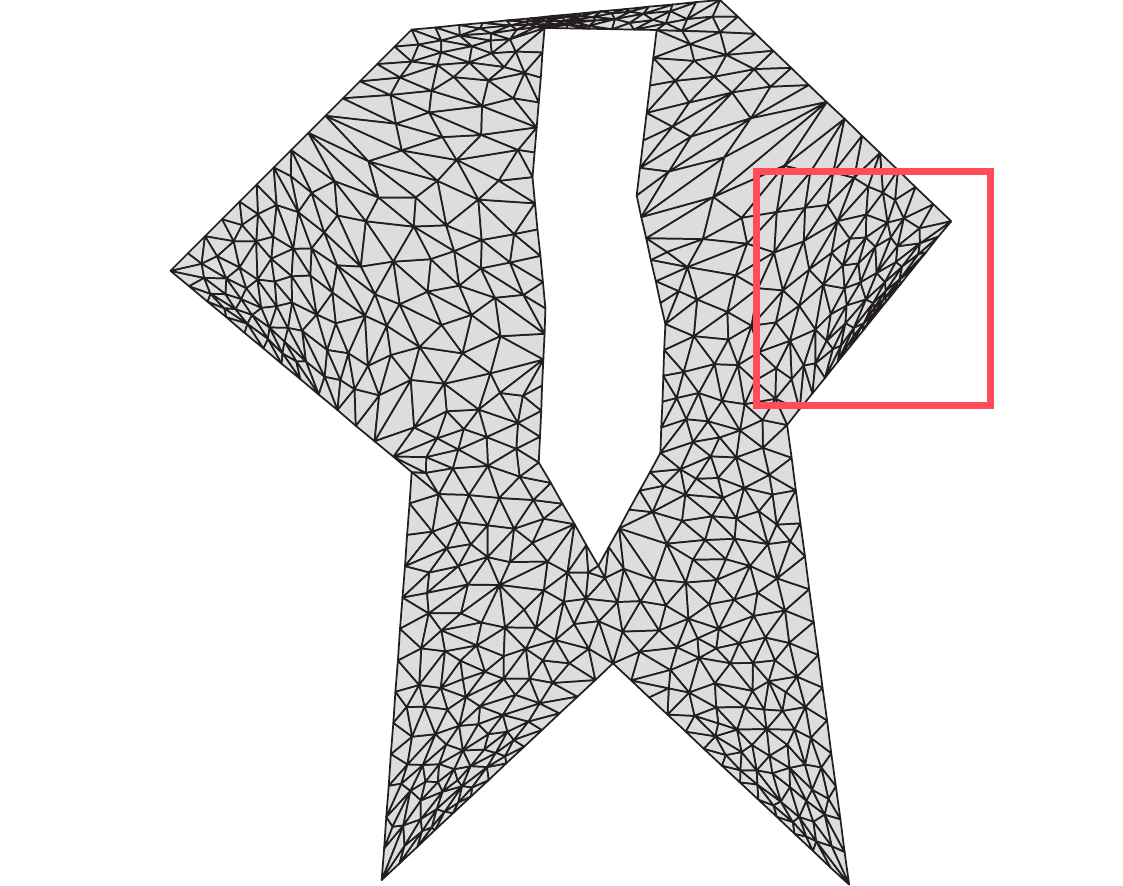}&
  \includegraphics[width=0.2\columnwidth]{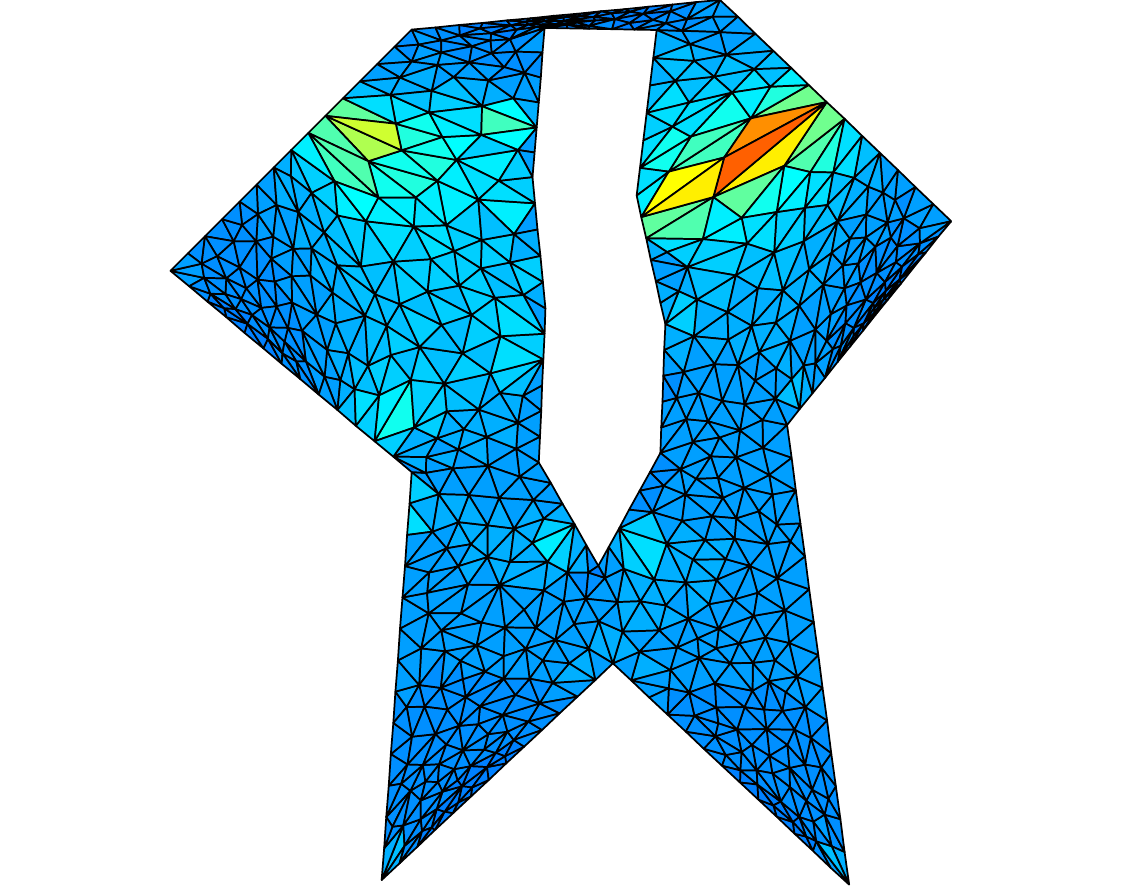}&
    \includegraphics[width=0.2\columnwidth]{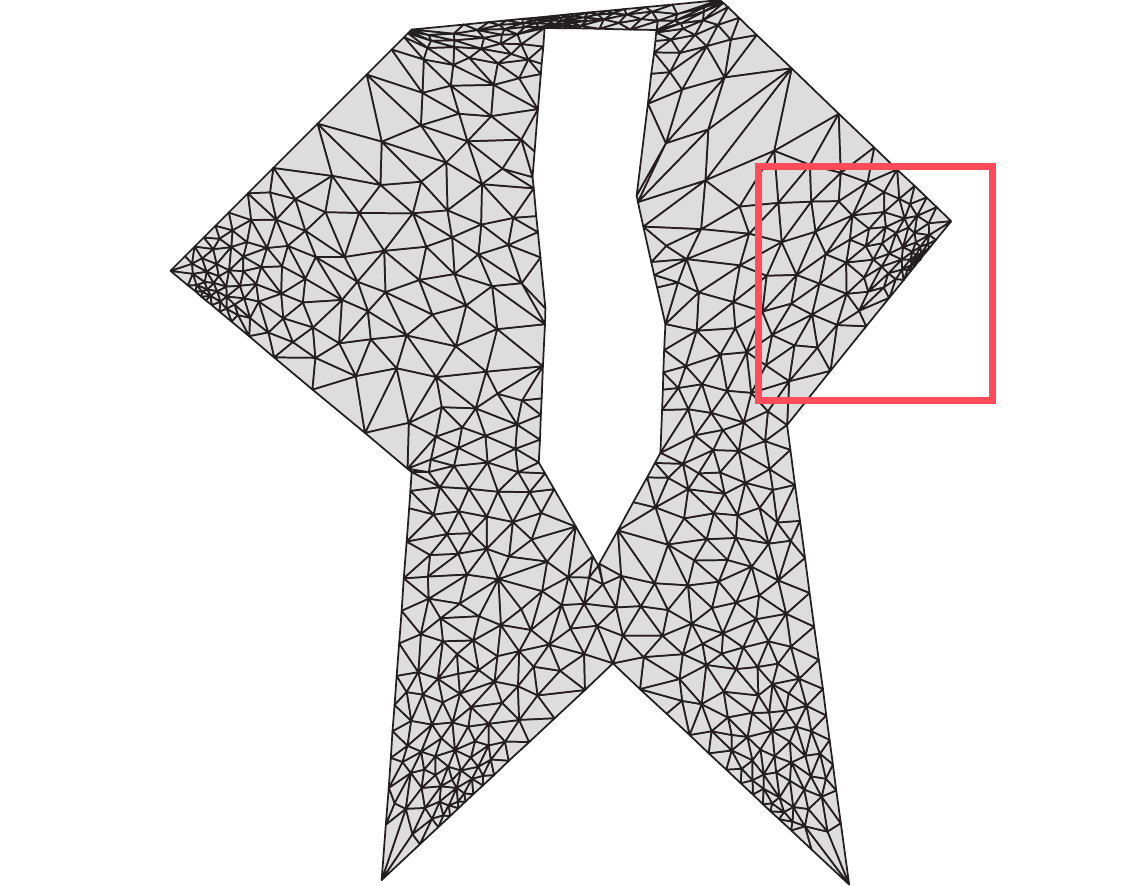}&
    \includegraphics[width=0.2\columnwidth]{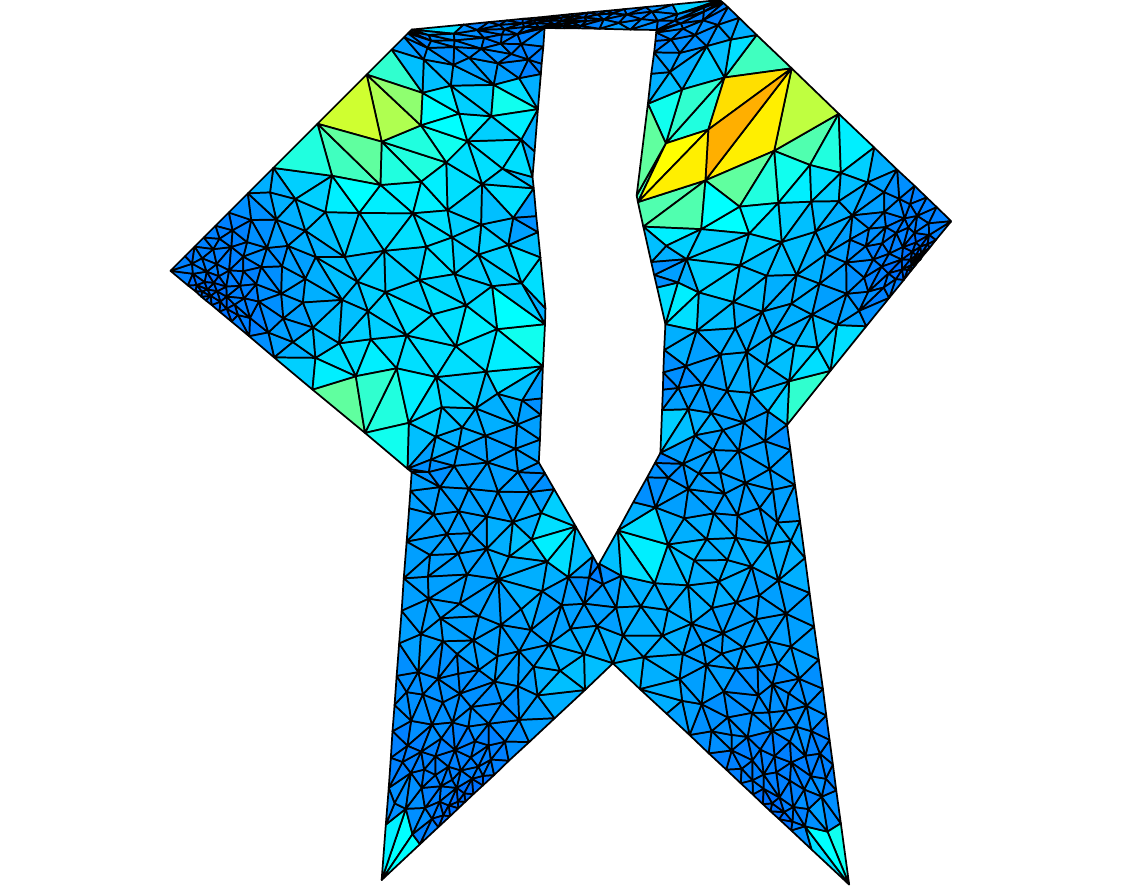}\\
    & $E_{\mathrm{dir}}=0.801$ & & $E_{\mathrm{dir}}=0.726$ &  \\

   \end{tabular}
  \caption{  Computing bijective simplicial mappings of planar meshes (source meshes are shown in left column) onto a polygonal domains using: Dirichlet energy with orientation preservation constraints and a \emph{fixed} uniform boundary map (second and third columns); and Dirichlet energy with orientation preservation constraints and the linear boundary conditions of Theorem {\ref{thm:bijectivity2}} (fourth and fifth columns). Note how the boundary map is optimized in the latter case to reduce the Dirichlet energy of the map (see $E_{\textrm{dir}}$ value indicating the total Dirichlet energy). The third and fifth columns show the norm of the gradient of each map color coded (red indicates large gradients and blue small gradients).  }\label{fig:woody}
\end{figure}

Figure \ref{fig:woody} depicts four examples of mapping a planar triangular mesh (the left column shows the source meshes) onto a polygonal domain. Each row shows the result of mapping the respective source mesh to a polygonal domain by minimizing the Dirichlet energy of the map while constraining the map to be orientation-preserving  using the conic formulation described above. The second and third columns in Figure \ref{fig:woody} show the result of minimizing the Dirichlet energy with the orientation preservation constraints while prescribing the boundary map to be the \emph{fixed} uniform map, that is, mapping the boundary vertices of the source mesh to equally spaced locations along each edge of the target polygon. Since this boundary map is a bijection, Theorem \ref{thm:bijectivity1} implies that $\Phi$ is a bijection. The Dirichlet energy of the resulting maps ($E_{\textrm{dir}}$) are written below the second column images.  The third column shows the gradient norm of the map (i.e., $\norm{A_j}_F$) color-coded. The fourth and fifth columns show the result of minimizing the Dirichlet energy with the orientation preservation constraints but this time with the boundary conditions of Theorem \ref{thm:bijectivity2} (i.e., Eq.~(\ref{e:linear_con})). Theorem \ref{thm:bijectivity2} guarantees the resulting map to cover the target domain exactly, and to be injective over the interior of the source mesh. Note how the boundary map is optimized (several areas of interest are highlighted with red squares) to reduce the overall Dirichlet energy of the resulting map (compare the energies $E_{\textrm{dir}}$ at the bottom of each image in the second and fourth columns). The fifth column, similarly to the third column shows the magnitudes of the gradients.

\begin{figure}[t] 
\centering
\begin{tabular}{cc}
  \includegraphics[width=0.25\columnwidth]{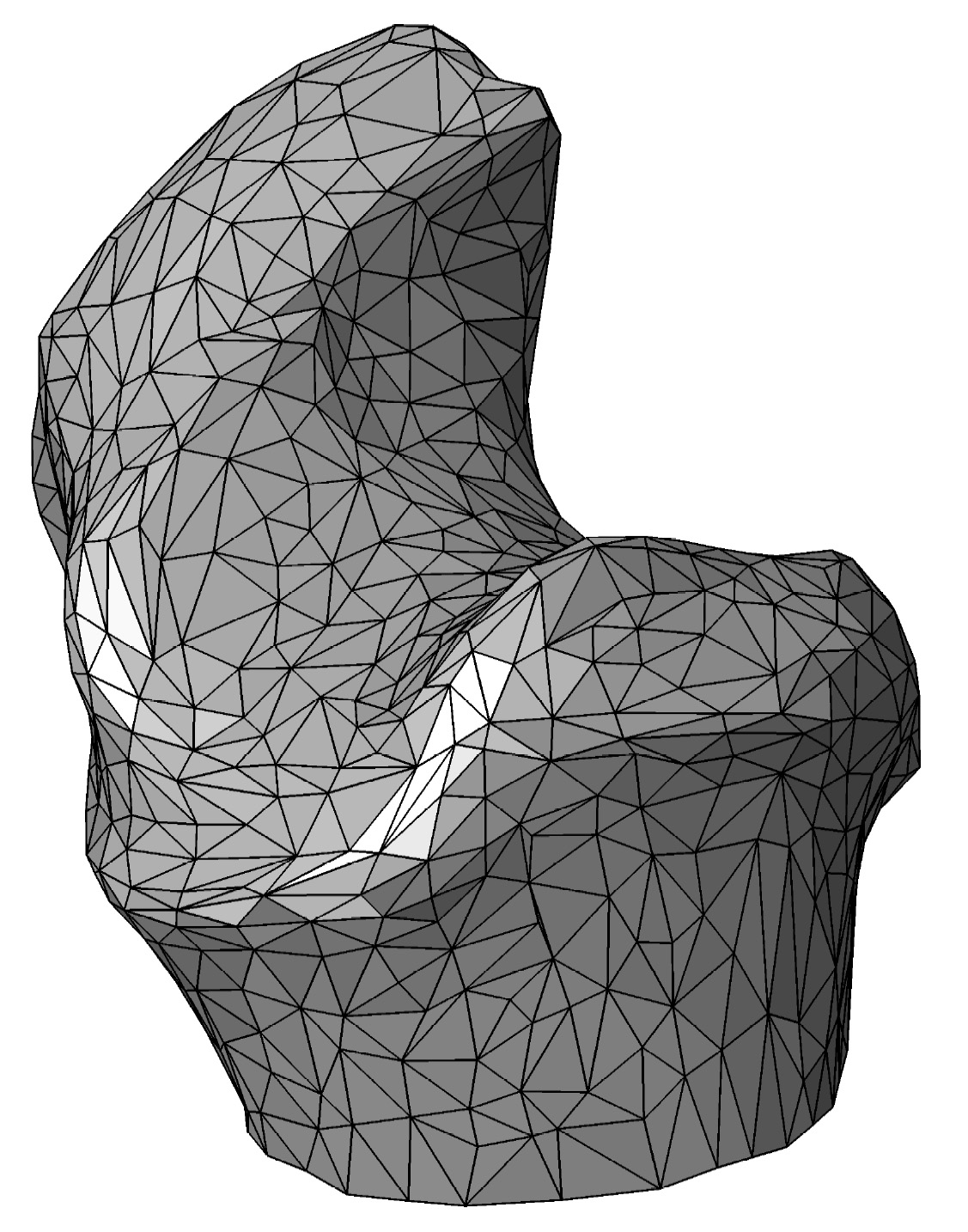}&
  \includegraphics[width=0.25\columnwidth]{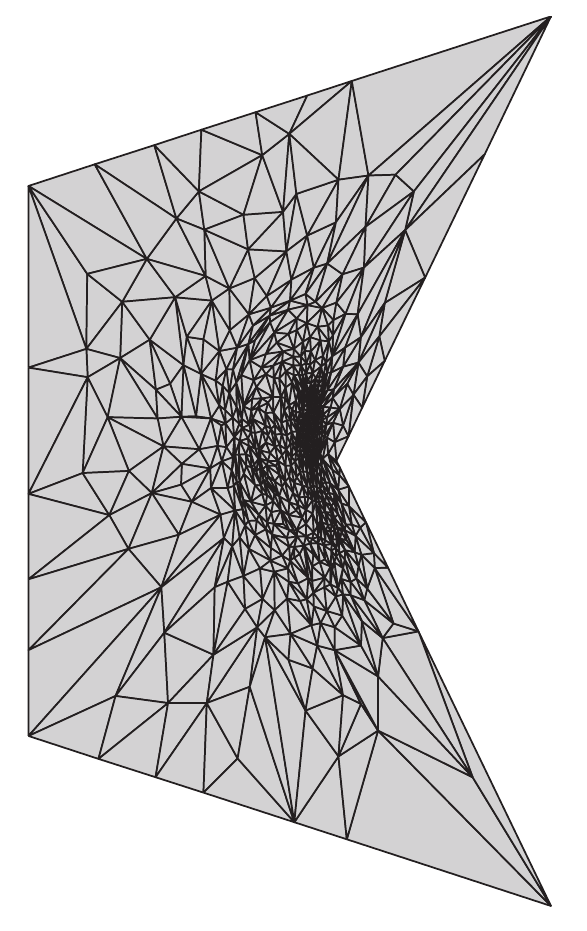}\\
  (a) & (b) \\

   \end{tabular}
  \caption{ Injectively parameterizing a surface mesh (a) onto a polygonal domain in the plane (b) by optimizing the Dirichlet energy with orientation preservation constraints and the linear boundary conditions of Theorem \ref{thm:bijectivity2}.}\label{fig:param}
\end{figure}

Figure \ref{fig:param} shows an example of mapping a triangular mesh embedded in 3D (mesh surface, in (a)) onto a polygonal domain in the plane in (b). Here as-well we have optimized the Dirichlet energy, together with the orientation preserving constraints as above. The linear boundary conditions of Theorem \ref{thm:bijectivity2} allowed the optimization to choose the boundary map that allowed low Dirichlet energy of this parameterization under the given assignment.


\section{Conclusions}
This paper utilizes the concept of degree of simplicial maps over cycles to formulate and prove three sufficient conditions for guaranteeing that a simplcial map of a mesh with boundary into a polytope is a bijection. The conditions are practical in the sense that they can be incorporated into algorithms that preserve orientations of simplices to produce bijective mappings. The conditions are appropriate for cases where a bijective mapping of a mesh with boundary onto a known polytope is sought and the boundary mapping is either supplied or unknown and needs to be optimized as-well.

A limitation of the sufficient conditions developed in this paper is that they still do not allow complete freedom in optimizing the boundary map. For example, they do not allow boundary faces of the mesh to slide from one boundary face of the target polytope to another boundary face of the polytope. Ideally, such a property would allow optimizing the boundary map so to reduce the energy further.

Finally, the paper focuses on the case of meshes with boundary and a natural question is how to bijectively map manifold meshes without boundary. This question involves topological questions and we leave this very interesting problem to future work.

\newpage
\section{Appendix}

\parpic[r]{
\begin{minipage}{0.15\columnwidth}\hspace{-0.3cm}
    \includegraphics[width=0.75\textwidth]{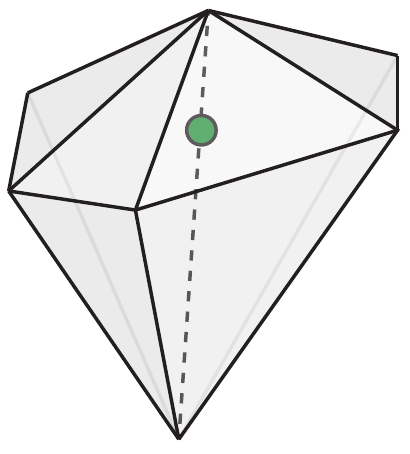}
\end{minipage}}

In the appendix we prove the open map property of non-degenerate orientation preserving simplicial maps.
In the proof we will use the notion of \emph{1-ring} which is defined for arbitrary $x\in\interior{\M}$ as $R_x=\bigcup_{\sigma\in\K_d, x\in \sigma} \sigma$.
The inset depicts an example of the 1-ring of a point (green) in the relative interior of an edge in a tetrahedral mesh ($d=3$).\\

\textbf{Lemma \ref{lem:open_map}}
\textit{
Let $\Phi:\M\subset \Real^n\too \Real^d$, $n\geq d$, be a non-degenerate orientation preserving simplicial map of a compact $d$-dimensional mesh $\M$ into $\Real^d$. Then $\Phi$ is an open map.}
\begin{proof}
We need to show that for every open set $U\subset\interior{\M}$ the set $\Phi(U)$ is an open set. That is, any  point $x\in U$ is mapped to an interior point $\Phi(x)\in\interior{\Phi(U)}$. We will use the 1-ring of $x$, $R_x=\bigcup_{\sigma\in\K_d, x\in \sigma} \sigma$ . Let $W=U \cap \interior{R_x}$. Since $\Phi$ is non-degenerate, $\Phi(x)$ has some positive distance to $Q:=\Phi(\partial \closure{W})$. Let us consider a small open neighborhood $V$ of $\Phi(x)$ such that all points in $V$ are strictly closer to $\Phi(x)$ than $Q$.\\
We will compute the Brouwer degree (see F\"{u}hrer characterization on page 39 in \cite{outerelo2009mapping}) $\deg(\Phi,W,p)$, for points $p\in V$. Let $Y=\cup_{\tau\in\K_{d-1}}\Phi(\tau)$. $Y$ has measure zero. Since $\Phi$ is non-degenerate and continuous we can find a point $q=\Phi(x')\in V\setminus Y$, where $x'\in W$. As $\Phi$ is orientation preserving we have $\deg(\Phi,W,q)\geq 1$.  By homotopy invariance $\deg(\Phi,W,p)$ is constant for all $p\in V$. Hence, $\deg(\Phi,W,p)\geq 1$ for all $p\in V$. The existence of solution property of the degree now implies that any such $p\in V$ has a pre-image in $W$, $p=\Phi(z)$, $z\in W\subset U$. Hence we proved $\Phi(x)\in V\subset \Phi(U)$, as-required.\\

\end{proof}

\bibliographystyle{amsplain}
\bibliography{bijective}

\providecommand{\bysame}{\leavevmode\hbox to3em{\hrulefill}\thinspace}
\providecommand{\MR}{\relax\ifhmode\unskip\space\fi MR }
\providecommand{\MRhref}[2]{%
  \href{http://www.ams.org/mathscinet-getitem?mr=#1}{#2}
}
\providecommand{\href}[2]{#2}
\begin{thebibliography}{10}

\bibitem{Aigerman:bd3d:2013}
Noam Aigerman and Yaron Lipman, \emph{Injective and bounded distortion mappings
  in 3d}, ACM Transactions on Graphics (proceedings of ACM SIGGRAPH)
  \textbf{32} (2013), no.~4, to appear.

\bibitem{MOSEK}
E.~D. Andersen and K.~D. Andersen, \emph{{The MOSEK interior point optimization
  for linear programming: an implementation of the homogeneous algorithm}},
  pp.~197--232, Kluwer Academic Publishers, 1999.

\bibitem{Aronov:1993}
Boris Aronov, Raimund Seidel, and Diane Souvaine, \emph{On compatible
  triangulations of simple polygons}, Comput. Geom. Theory Appl. \textbf{3}
  (1993), 27--35.

\bibitem{Ball81}
J.~M. Ball, \emph{{Global invertibility of Sobolev functions and the
  interpenetration of matter}}, Proc. Roy. Soc. Edinburgh Sect. A \textbf{88}
  (1981), no.~3-4, 315--328.

\bibitem{Bommes:2013}
David Bommes, Marcel Campen, Hans-Christian Ebke, Pierre Alliez, and Leif
  Kobbelt, \emph{Integer-grid maps for reliable quad meshing}, ACM Trans.
  Graph. \textbf{32} (2013, to appear), no.~4.

\bibitem{Ciarlet87}
Philippe~G. Ciarlet and Jindřich Nečas, \emph{Injectivity and self-contact in
  nonlinear elasticity}, Archive for Rational Mechanics and Analysis
  \textbf{97} (1987), no.~3, 171--188 (English).

\bibitem{Cristea00}
Mihai Cristea, \emph{A generalization of the argument principle}, Complex
  Variables, Theory and Application: An International Journal \textbf{42}
  (2000), no.~4, 333--345.

\bibitem{Floater03one-to-onepiecewise}
Michael~S. Floater, \emph{One-to-one piecewise linear mappings over
  triangulations}, Mathematics of Computation \textbf{72} (2003), 685--696.

\bibitem{Gortler06discreteone-forms}
Steven~J. Gortler, Craig Gotsman, and Dylan Thurston, \emph{Discrete one-forms
  on meshes and applications to 3d mesh parameterization}, Journal of CAGD
  (2006), 83--112.

\bibitem{YALMIP}
J.~Löfberg, \emph{Yalmip : A toolbox for modeling and optimization in
  {MATLAB}}, Proceedings of the CACSD Conference (Taipei, Taiwan), 2004.

\bibitem{Lipman:2012:BDM}
Yaron Lipman, \emph{Bounded distortion mapping spaces for triangular meshes},
  ACM Trans. Graph. \textbf{31} (2012), no.~4, 108:1--108:13.

\bibitem{Lipman:2013:feature_matching}
Yaron Lipman, Stav Yagev, Roi Poranne, David~W. Jacobs, and Ronen Basri,
  \emph{Feature matching with bounded distortion}, ACM Trans. Graph., accepted
  for publication.

\bibitem{Giuseppe94}
Giuseppe Marco, Gianluca Gorni, and Gaetano Zampieri, \emph{Global inversion of
  functions: an introduction}, Nonlinear Differential Equations and
  Applications NoDEA \textbf{1} (1994), no.~3, 229--248 (English).

\bibitem{Massey92}
William~S. Massey, \emph{Sufficient conditions for a local homeomorphism to be
  injective}, Topology and its Applications \textbf{47} (1992), no.~2, 133 --
  148.

\bibitem{mcauley66}
Louis~F. McAuley, \emph{Conditions under which light open mappings are
  homeomorphisms}, Duke Math. J. \textbf{33} (1966), no.~3, 445--452.

\bibitem{MeistersOlech63}
G.~H. Meisters and C.~Olech, \emph{Locally one-to-one mappings and a classical
  theorem on schlicht functions}, Duke Math. J. \textbf{30} (1963), no.~1,
  63--80.

\bibitem{outerelo2009mapping}
E.~Outerelo and J.M. Ruiz, \emph{Mapping degree theory}, Graduate studies in
  mathematics, American Mathematical Society, 2009.

\bibitem{Schneider:2013:BCM}
Teseo Schneider, Kai Hormann, and Michael~S. Floater, \emph{Bijective composite
  mean value mappings}, Computer Graphics Forum \textbf{32} (2013), no.~5,
  137--146, Proceedings of SGP.

\bibitem{Schueller:LIM:2013}
Christian Sch{\"u}ller, Ladislav Kavan, Daniele Panozzo, and Olga
  Sorkine-Hornung, \emph{Locally injective mappings}, Computer Graphics Forum
  (proceedings of EUROGRAPHICS/ACM SIGGRAPH Symposium on Geometry Processing)
  \textbf{32} (2013), no.~5, 125--135.

\bibitem{spivak1979comprehensive}
M.~Spivak, \emph{A comprehensive introduction to differential geometry},
  Publish or Perish, Incorporated, 1979.

\bibitem{Tutte:1963}
W.~T. Tutte, \emph{{How to Draw a Graph}}, Proceedings of the London
  Mathematical Society \textbf{s3-13} (1963), no.~1, 743--767.

\bibitem{Xu:2011:ETG:2286660.2286720}
Yin Xu, Renjie Chen, Craig Gotsman, and Ligang Liu, \emph{Embedding a
  triangular graph within a given boundary}, Comput. Aided Geom. Des.
  \textbf{28} (2011), no.~6, 349--356.

\end{thebibliography}

\end{document}